\documentclass[12pt]{article}
\pdfoutput=1
\usepackage{amsfonts,amsmath,amssymb,amsthm,graphicx}
\usepackage{enumerate,enumitem}
\usepackage{setspace} % for spacing
\usepackage{hyperref}
\usepackage[capitalize]{cleveref}
\usepackage{ifthen}
\usepackage[margin=1.25in]{geometry}
\usepackage[font=footnotesize,labelfont=bf]{caption}

\usepackage{fix-cm}
\usepackage{graphicx}
\usepackage[titletoc]{appendix}
\usepackage{xcolor}
\usepackage{amsmath}
\usepackage{amsthm}
\usepackage{xifthen}
\usepackage{stackengine}
\usepackage{comment}
\usepackage[utf8]{inputenc}
\usepackage{nicefrac}  
\usepackage{xfrac}
\usepackage{enumerate}
\usepackage{CJKutf8} 
\usepackage{cleveref} 
\usepackage{subcaption} % 核心包：用于 subfigure
\usepackage{bbm}
\usepackage{natbib}

\usepackage{enumitem}
\theoremstyle{plain}

\newtheorem{theorem}{Theorem}

\newtheorem{corollary}{Corollary}

\newtheorem{definition}{Definition}
\newtheorem{example}{Example}

\newtheorem{lemma}{Lemma}

\newtheorem{proposition}{Proposition}

\newcommand{\squishlist}{
\begin{list}{{{\small{$\bullet$}}}}
{\setlength{\itemsep}{3pt}      \setlength{\parsep}{1pt}
\setlength{\topsep}{1pt}       \setlength{\partopsep}{0pt}
\setlength{\leftmargin}{1em} \setlength{\labelwidth}{1em}
\setlength{\labelsep}{0.5em} } }
\newcommand{\squishend}{  \end{list}}

\newcommand{\omt}[1]{} % Comment out content

\usepackage{tikz}
\usetikzlibrary{arrows.meta}
\usetikzlibrary{patterns, decorations.pathmorphing, decorations.pathreplacing, plotmarks}

\usepackage{pgfplots}
\pgfplotsset{compat=newest}
\usepgfplotslibrary{groupplots}

\usepackage{color-edits}
\addauthor[Boli]{xbl}{blue}
\addauthor[Zhicheng]{dzc}{red}
\addauthor[Yingkai]{yl}{orange}

\newcommand{\cc}[1]{\ensuremath{\mathsf{#1}}}

\newcommand{\prob}[2][]{\text{Pr}\ifthenelse{\not\equal{}{#1}}{_{#1}}{}\!\left[{\def\givenn{\middle|}#2}\right]}
\newcommand{\expect}[2][]{\mathbb{E}\ifthenelse{\not\equal{}{#1}}{_{#1}}{}\!\left[{\def\givenn{\middle|}#2}\right]}
\newcommand{\indicator}[2][]{\mathbf{1}\ifthenelse{\not\equal{}{#1}}{_{#1}}{}\!\left\{{\def\givenn{\middle|}#2}\right\}}
\newcommand{\Variance}[2][]{\text{Var}\ifthenelse{\not\equal{}{#1}}{_{#1}}{}\!\left[{\def\givenn{\middle|}#2}\right]}

\newcommand{\E}{\mathbb{E}}
\newcommand{\ind}[1]{\mathbf{1}{\left\{#1\right\}}} % Simplified version

\newcommand{\dd}{\ \mathrm{d}}

\newcommand{\testset}{\mathcal{T}}

\newcommand{\signalSpace}{\mathcal{S}}
\newcommand{\statespace}{\Theta}
\newcommand{\messageSpace}{\mathcal{M}}
\newcommand{\outcome}{\omega}
\newcommand{\outcomeSpace}{\Omega}

\newcommand{\supp}{\cc{supp}}

\newcommand{\prior}{F}

\newcommand{\MPC}{\cc{MPC}}

\title{\Large Going Public: Communication in Collective Decisions\thanks{We thank Xiao Lin, Rakesh Vohra and audiences at SUFE and SJTU for helpful comments and suggestions. Yingkai Li also thanks NUS Start-up grant for financial support.}}

\author{Zhicheng Du\thanks{School of Artificial Intelligence, Renmin University of China.
Email: \texttt{duzhicheng@ruc.edu.cn}}
\and Yingkai Li\thanks{Department of Economics, National University of Singapore.
Email: \texttt{yk.li@nus.edu.sg}} 
\and Boli Xu\thanks{Department of Economics, University of Iowa.
Email: \texttt{boli-xu@uiowa.edu}}} %Tippie College of Business, 

\begin{document}

\date{}
\maketitle

\begin{abstract}

\noindent A principal and $n\ge 2$ agents can launch a project if the principal proposes it and at least $k$ agents accept. Their individual payoffs from the project depend on an ex ante unknown state. The principal can conduct a test to learn about the state and then communicate her findings to the agents via cheap talk. This paper focuses on comparing two communication regimes: public and private messaging. We show that public messaging is weakly dominant: any outcome implementable under private messaging can also be implemented under public messaging. Moreover, in a canonical environment with linear payoffs, we characterize the principal's optimal test in each regime and show that public messaging can be strictly dominant if and only if there exist two agents who are the principal's \textit{conflicting allies}.  \\

\vspace{6pt}
\noindent \textbf{Keywords:} public vs. private communication, learning, cheap talk, $k$-majority voting. 

\vspace{3pt}

\noindent \textbf{JEL Codes:} D70, D82, D83. 
\end{abstract}

\thispagestyle{empty}
\newpage
\setcounter{page}{1}

\section{Introduction}
\label{sec:intro}

Many consequential decisions are made by collective bodies—--boards, committees, councils, panels, or juries—--whose members often have heterogeneous interests. In many such settings, the decision process is led by a chairperson who typically bears two key responsibilities. First, she determines which information is generated to guide the decision: which investigation to conduct, which reference to consult, or which data to collect. Second, she can communicate her findings to other members to build support for the decision. 

In many such situations, while the information-gathering process is observable, the chairperson’s communication takes the form of soft messages, such as subjective evaluations or impressions, confidential references, or recommendations that summarize complex evidence. For example, when a corporate board votes on whether to pursue a business plan, the chairperson may conduct due diligence, yet ultimately provide a subjective evaluation; when a promotion committee evaluates a candidate, the chairperson may rely on subjective teaching observations and confidential reference checks; when a government council decides on a procurement plan, the chairperson may gather information about alternative plans and make a final recommendation. A common challenge across these settings is that the chairperson cannot commit to truthfully reporting what she has learned. 

This paper studies the interaction between endogenous learning and non-committal communication in collective decision-making, focusing on a salient design choice: should communication be public or private? Public communication includes open meetings, circulated reports, and common briefings, while private communication includes one-on-one conversations, back-channel coordination, and individualized messages. Given the chairperson's \textit{limited commitment} in communication, as described above, she faces a central trade-off between \textit{flexibility} and \textit{credibility} when choosing the communication regime. While private communication affords greater flexibility for targeted persuasion, public communication generates common knowledge among all members, which can be essential for sustaining credibility. 

The paper delivers two main findings. First, the flexibility afforded by private communication is valueless, making public communication weakly dominant. Second, in situations where two agents are the principal's \textit{conflicting allies}, the credibility offered by public communication is valuable, rendering public communication strictly dominant. 

The paper builds a model in which a principal and $n\ge 2$ agents face a collective decision of whether to launch a project. The project is launched if and only if the principal proposes it and at least $k$ agents approve of the proposal. Players' individual payoffs from the project are determined by a state drawn from a continuous distribution. The game begins with the principal publicly adopting a test to learn about the payoff-relevant state. After privately observing the test result, the principal can propose the project and communicate her findings to each agent via cheap talk. The agents then simultaneously decide whether to accept or reject the proposal. We focus on comparing two communication regimes: \textit{public messaging} requires that all agents always receive the same message, while private messaging does not impose such a constraint.  

\Cref{sec:general} presents the first main finding. \Cref{thm:public-implements-more} shows that, for a given test, any implementable outcome under private messaging is also implementable under public messaging. The intuition is as follows. Under private messaging, conditional on being willing to propose the project, the principal has a transparent incentive to always send each agent the message that maximizes his probability of acceptance (\Cref{prop:pbe_characterization}). As a result, beyond the proposal itself, the principal's message conveys no additional payoff-relevant information to individual agents, and therefore, the flexibility afforded by private messaging has no value (\Cref{prop:private-babbling}). An immediate implication of \Cref{thm:public-implements-more} is that, taking into account the principal's optimally chosen test, she also (weakly) prefers public messaging to private messaging. 

This result naturally raises the question of when public messaging can be \textit{strictly} dominant. \Cref{sec:linear} addresses this question in a canonical environment where players' payoffs are linear in the state. We characterize the optimal test under each communication regime (Propositions~\ref{prop:publiclinear} and \ref{prop:privatelinear}). Based on these characterizations, \Cref{thm:strict} shows that public messaging can be strictly dominant if and only if there exist two agents who are the principal's \textit{conflicting allies}---that is, they each share some common interests with the principal, yet are in conflict with each other. Intuitively, the credibility afforded by public communication is valuable in such situations: by publicly siding with one of the two conflicting allies, which is selected at random, the principal can secure the trust of the side she chooses.

\subsection{Related Literature}
\label{sec:literature}

First, this paper contributes to the literature on information design in voting. The closest related work is \citet{chan2019pivotal}, which studies how a sender can optimally use private communication to persuade multiple receivers under majority rule.\footnote{Relatedly, \citet{alonso2016persuading} analyzes public persuasion under majority rule, while \citet{bardhi2018modes} studies persuasion under unanimity rule. See also \citet{xu2023}.} A key distinction between the two settings concerns the sender's commitment in communication. Whereas the sender has full commitment in \citet{chan2019pivotal}, she is subject to limited commitment in our model. This difference leads to a sharp contrast in results. While \citet{chan2019pivotal} shows that private communication is better under full commitment, we find the opposite under limited commitment. 

Second, this paper speaks to the literature on information design with limited commitment. In our setting, the principal can commit to the information structure, but cannot commit to truthful reporting of the realized signal.\footnote{Some related information design models allow for misreporting but impose additional constraints. For instance, in \citet{guo2021costly} and \citet{nguyen2021bayesian}, misreporting is costly, with the cost increasing in the distance between the reported and actual signals. In \citet{lin2024credible}, misreporting is permitted provided that the distribution of the reported signals matches that of the actual signals.} This feature also appears in \citet{lipnowski2022persuasion, kreutzkamp2024persuasion, lyu2022information, li2024information}.\footnote{The companion paper, \citet{li2024information}, analyzes how the principal's limited commitment leads to simple and realistic predictions of the optimal test in settings where the agent's type space and action space are complex. This paper, by contrast, focuses on comparing different communication regimes in settings with multiple agents.} We differ from these papers by focusing on explicitly comparing different communication regimes in a multiple-agent setting. 

Third, this paper speaks to the literature on cheap talk with multiple receivers. The seminal work in this strand of literature is \citet{farrell1989cheap}, which highlights the flexibility-credibility trade-off. Building on this insight, \citet{goltsman2011talk} explicitly compares public and private communication. Despite the thematic similarity, our setting differs substantially from these two papers, as we focus on the context of $k$-majority voting.\footnote{Some related papers study cheap talk in voting but do not compare the communication regimes. For example, \citet{schnakenberg2015expert} studies only public communication, while \citet{schnakenberg2017informational, salcedo2019persuading} analyze only private communication. These papers also take the information structure as exogenous.} In our setting, upon proposing the project, the principal has a transparent incentive to maximize voter support,\footnote{Our notion of \textit{transparent incentive} is related to, but distinct from, the notion of \textit{transparent motive} in \citet{lipnowski2020cheap}. In their setting, the sender's payoff is state-independent, while in ours, the sender's payoff is state-dependent, although her ordinal preference over a receiver's actions is state-independent conditional on proposing the project.} which renders the flexibility afforded by private communication completely valueless and makes public communication weakly dominant. This contrasts with \citet{farrell1989cheap} and \citet{goltsman2011talk}, which focus on decision environments that are independent across audiences and find that either public or private communication may be preferable. Beyond differences in setting, our paper also endogenizes the information structure, yielding additional insights into the sender's optimal learning decisions.

Finally, this paper adds to the study of dispersed information in collective decisions. \citet{ali2025political} shows that informational dispersion can lead to a ``deadlock'' in collective decisions with exogenous information environments. In contrast, our paper examines how a chairperson can design the information environment to overcome such a deadlock, even when she has limited commitment in communication.

\section{Model}
\label{sec:prelim}

\subsection{Setup}
\label{subsec:setup}

A principal (``she'') and $n\ge 2$ agents (each ``he'', indexed by $i\in[n]$) decide whether to launch a project. The project is launched if and only if the principal proposes it and at least $k\le n$ agents accept the proposal. Players' individual payoffs from the project are governed by a state $\theta \in \statespace:= [-1, 1] $, which is drawn from a distribution represented by its cumulative distribution function $\prior$. We assume that $\prior$ is continuous, meaning that the distribution of $\theta$ is atomless, which also allows us to let $f$ denote the corresponding probability density function.

If the project is not launched, all the players get a payoff normalized to zero.
If the project is launched, the principal gets a payoff of $u(\theta)$, and each agent $i$ gets $v_i(\theta)$. 
For technical reasons, we let the players' payoff functions be bounded. 
To make the analysis non-trivial, we also assume that $\inf_\theta [u(\theta)] < 0 < \sup_\theta [u(\theta)]$ and $\inf_\theta [v_i(\theta)] < 0 < \sup_\theta [v_i(\theta)]$ for any $i$; that is, every player may benefit or suffer from launching the project. All the players are risk-neutral. 

None of the players observes $\theta$ directly. However, the principal can conduct a \textit{test} to learn about $\theta$ at no cost. A test $t=(\signalSpace,\pi)$ is described by (a) a measurable signal space $\signalSpace$ and (b) a signal generation rule $\pi$ that maps the state into a distribution of the signals. We let $\testset$ denote the collection of all possible tests.

The game proceeds as follows. 
\begin{itemize}%[(i)]
    \item \textbf{Learning stage.}~
    The state $\theta$ is realized by $F$. The principal publicly chooses a test $t=(\signalSpace,\pi)$ and privately observes the test result $s\sim \pi(\cdot\mid \theta)$. 
    \item \textbf{Proposing stage.}~
    The principal decides whether to propose or forgo the project. If she forgoes the project ($P=0$), the game ends. If she proposes the project ($P=1$), she pays a fixed cost of $c>0$, and the game proceeds to the next stage.  
    \item \textbf{Messaging stage.}~
    The principal chooses a (cheap-talk) message profile $\boldsymbol{m}:=\{m_1, \dots, m_n\}$, where $m_i \in \messageSpace$ is the message sent to agent $i$. The message space $\messageSpace$, common for all the agents, is a measurable set that includes $2^{\signalSpace}$ as a subset. 
    \item \textbf{Voting stage.}~
    Each agent $i$ either accepts or rejects the proposal. The project is launched ($L=1$) if at least $k$ agents accept, and not launched ($L=0$) otherwise. 
\end{itemize}

Our paper focuses on comparing public versus private messaging. In terms of the model setup, the only difference between these two communication regimes is the restriction on $\boldsymbol{m}$.
Under public messaging, it is common knowledge that all the agents receive the same message. We let $m_0$ denote this common message, so formally, the principal is subject to the following restriction (both on and off the equilibrium path): $m_i \equiv m_0$, $\forall i$, for some $m_0\in \messageSpace$. 
Under private messaging, we impose no such restriction; this is precisely where the flexibility comes from.

We also introduce a trembling-hand noise in each agent's voting decision. Specifically, each agent is a mechanical type with probability $\epsilon>0$ and a rational type with probability $1-\epsilon$. A rational type votes in his own interest, while a mechanical type randomizes uniformly between acceptance and rejection. The events of being a mechanical type are independent across agents. This noise guarantees that each agent is pivotal with a positive probability. 
In particular, it rules out trivial voting equilibria in which all agents always accept or reject, so no individual is pivotal and the principal cannot affect the outcome through learning and communication.
That being said, we will mainly consider the situation where $\epsilon$ is arbitrarily small. 

Finally, we highlight the principal's \textit{limited commitment} in our setting. Specifically, the principal has ex ante commitment to a test for learning, but no ex post commitment to truthfully reporting her findings, as the communication is via cheap talk. This reflects our main departure from standard information design models.

\subsection{Strategies and Solution Concept}
\label{subsec:strategy-solution-concept}

The principal's strategy includes three components, $(t, \gamma, \sigma)$, corresponding to the three stages in which she needs to make a decision. Her \textit{learning strategy} is the choice of $t\in \testset$. Given $t=(\signalSpace, \pi)$, her \textit{proposal strategy} can be written as $\gamma:\signalSpace\to [0,1]$, where $\gamma(s)$ is the probability of making a proposal upon seeing the signal $s$. Following a proposal, her \textit{messaging strategy} is $\sigma:\signalSpace\rightarrow \Delta(\messageSpace^n)$, where $\sigma(\boldsymbol{m}\mid s)$ is the probability of sending a message profile $\boldsymbol{m}$ to all agents upon seeing the signal $s$. 
Based on $\sigma$, the marginal probability for agent $i$ to receive message $m_i\in\messageSpace$ given signal $s\in\signalSpace$ is denoted by
\begin{equation*}
    \sigma_i(m_i\mid s) := \sum_{\boldsymbol{m}_{-i}\in \messageSpace^{n-1}}\sigma\left((m_i,\boldsymbol{m}_{-i})\mid s\right).
\end{equation*}
Let $\Sigma$ denote the set of all (possibly stochastic) mappings from $\signalSpace$ to $\messageSpace^n$. Under private messaging, the collection of all feasible messaging strategies is exactly $\Sigma$. Under public messaging, due to the restriction of a common message, the collection of all feasible strategies is denoted by $\Sigma_{0} \subsetneq \Sigma$. When no confusion arises, we let $\sigma_0:\signalSpace\rightarrow \Delta(\messageSpace)$ denote the public messaging strategy. 

To describe the agent's strategy, we take the test $t$ as given. From the message $m_i$ and the fact that the principal has proposed, each agent $i$ forms a posterior belief about the state, which we denote by $\psi_i(m_i) \in \Delta(\statespace)$. When he is a rational type, his voting strategy can be written as $\tau_i:\messageSpace\to [0,1]$, where $\tau_i(m_i)$ is his intended acceptance probability upon seeing the message $m_i$. Taking into account the trembling-hand noise, his actual acceptance probability is 
$$
\hat{\tau}_i(m_i) := (1-\epsilon)\cdot \tau_i(m_i)+\frac{\epsilon}{2} \in \left[\frac{\epsilon}{2}, 1-\frac{\epsilon}{2}\right]~.
$$
We let $\boldsymbol{\tau}:=\{\tau_1, \dots, \tau_n\}$ denote the agents' voting strategy profile and $\boldsymbol{\psi}:=\{\psi_1, \dots, \psi_n\}$ their belief profile. 

Given a chosen test $t$, we study the weak Perfect Bayesian equilibrium (wPBE) of the continuation game, represented by $\rho = (\gamma, \sigma, \boldsymbol{\tau}, \boldsymbol{\psi})$ that satisfies the following conditions.\footnote{The off-equilibrium-path scenarios do not play a critical role in the analysis. All the results in this paper remain true if we use less permissive equilibrium concept by restricting the off-equilibrium-path beliefs.} 
\begin{enumerate}
    \item Each player's strategy is sequentially rational given their own belief and the other player's strategy.
    \item For each agent $i$, if there is a positive probability that the project is proposed and the message $m_i$ is sent, his belief $\psi_i(m_i)$ is formed by Bayes' rule. 
    \item For each agent $i$, if there is zero probability that the project is proposed and the message $m_i$ is sent, his belief $\psi_i(m_i)$ is arbitrary.
\end{enumerate}

\section{The Superiority of Public Communication}
\label{sec:general}

In this section, we establish the \emph{superiority of public communication}: for a given test, every equilibrium outcome implementable under private messaging is also implementable under public messaging.
This immediately implies that public messaging achieves a weakly higher payoff for the principal than private messaging.

To begin with,  we consider private messaging and fix a test $t$. For an equilibrium $\rho$ associated with $t$, we let $\messageSpace^{\rho}_{i}$ denote the set of messages that are sent to agent $i$ with positive probability on the equilibrium path.

\begin{lemma}[Restriction by Limited Commitment]
\label{prop:pbe_characterization}
In any equilibrium $\rho$ under private messaging, $\tau_i(m_i)=\tau_i(m_{i}'),\ \forall i, \ \forall m_i, m_i'\in \messageSpace^{\rho}_{i}$. That is, any agent $i$'s probability of accepting the proposal must be \emph{identical} across all messages in $\messageSpace^{\rho}_{i}$. 
\end{lemma}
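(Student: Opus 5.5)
The argument is a deviation argument in the messaging stage, conditional on the principal having proposed. Fix an equilibrium $\rho$ under private messaging and suppose, toward a contradiction, that for some agent $i$ there are two on-path messages $m_i, m_i' \in \messageSpace^{\rho}_i$ with $\tau_i(m_i) > \tau_i(m_i')$. Because the proposal cost $c$ is already sunk at the messaging stage, the principal's continuation payoff after signal $s$ and message profile $\boldsymbol{m}$ is
\begin{equation*}
\E\left[u(\theta)\mid s\right]\cdot \Pr\left[L=1\mid \boldsymbol{m}\right].
\end{equation*}
Here the launch probability depends on $\boldsymbol{m}$ only through the actual acceptance probabilities $\hat{\tau}_j(m_j)$. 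Since agents vote simultaneously and their mechanical types are independent, the number of acceptances is a sum of independent Bernoulli variables with parameters $\hat{\tau}_j(m_j)$. The probability that at least $k$ of them accept is therefore strictly increasing in each $\hat{\tau}_j$. Strictness comes from the trembling-hand noise: every $\hat{\tau}_j \in [\epsilon/2, 1-\epsilon/2]$, so agent $i$ is pivotal (exactly $k-1$ others accept) with strictly positive probability.

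The first key step is to show that the principal's conditional value $\E[u(\theta)\mid s]$ is strictly positive for almost every signal at which she proposes with positive probability and sends $m_i$ or $m_i'$. If she proposes at $s$, sequential rationality at the proposing stage requires that her expected continuation value, net of $c>0$, be at least zero. Hence $\E[u\mid s]\cdot \Pr[L=1] \ge c > 0$, which forces $\E[u\mid s] > 0$. This is exactly where the fixed cost does its work: it rules out proposing at signals where the principal would rather the project fail, so at every proposing signal her ordinal preference over each agent's vote is the same (she wants acceptance). This is the \emph{transparent incentive}.

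The second step is the deviation itself. Since $m_i'$ is sent on path, there is a positive-measure set of proposing signals $s$ at which $\sigma$ places positive probability on profiles $(m_i', \boldsymbol{m}_{-i})$. At any such $s$, replace $m_i'$ with $m_i$ while keeping $\boldsymbol{m}_{-i}$ unchanged. This replacement is feasible under private messaging, and agent $i$'s response to $m_i$ is the fixed $\tau_i(m_i)$ because $m_i$ is on path, so his belief and strategy there are pinned down. The deviation strictly raises $\hat{\tau}_i$ without affecting any other agent's vote, so by the first paragraph it strictly raises $\Pr[L=1]$. Multiplied by $\E[u\mid s]>0$, this is a strictly profitable deviation, contradicting sequential rationality of $\sigma$. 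Hence $\tau_i$ must be constant on $\messageSpace^{\rho}_i$.

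The main obstacle I expect is measure-theoretic care with a continuum of signals. Sequential rationality may only be required almost surely, so I would phrase the contradiction on a positive-probability set of signals; this is what ``sent with positive probability'' guarantees. I also need $\E[u\mid s]>0$ to hold on that set, which follows from the proposing-stage condition holding almost everywhere.
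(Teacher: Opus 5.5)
Your proposal is correct and follows essentially the same argument as the paper. Both use the proposal cost to get $\E[u(\theta)\mid s]>0$ at any proposing signal, the trembles to make agent $i$ pivotal with positive probability, and a profitable unilateral swap of $m_i'$ for $m_i$ that holds $\boldsymbol{m}_{-i}$ fixed. Your added care about phrasing the contradiction on a positive-probability set of signals is a refinement that the paper's proof leaves implicit.
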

\begin{proof}
See Appendix \ref{pf:pbe_characterization}. 
\end{proof}

\cref{prop:pbe_characterization} illustrates how the principal's limited commitment restricts equilibrium outcomes. Intuitively, for any agent, if multiple messages lead to his acceptance with different probabilities in equilibrium, the principal, conditional on being willing to propose the project, will always send him the most persuasive message that induces his highest acceptance probability.

As implied by \Cref{prop:pbe_characterization}, under private messaging, the principal essentially only needs one message to each agent upon a proposal. Roughly speaking, this indicates that the principal's message to an agent conveys no payoff-relevant information beyond the fact that the project is proposed, as we formalize in \Cref{prop:private-babbling}. 

\begin{definition}[Outcome]
An outcome is a mapping $\outcome: \statespace \rightarrow [0,1]^2$, where $\outcome_p(\theta)$ denotes the probability that the principal proposes the project when the state is $\theta$, and $\outcome_\ell(\theta)$ denotes the probability that the project is launched conditional on proposal when the state is $\theta$.
\end{definition}

An outcome is defined in this way because the project's proposal and launch are the only two payoff-relevant events in the game, regardless of the communication regime. Given a test $t$, we say that an outcome $\outcome$ is \textit{implementable} under public (private) messaging if there exists an equilibrium under public (private) messaging that induces it. We further let $\outcomeSpace_{\mathsf{pub}}(t)$ and $\outcomeSpace_{\mathsf{pri}}(t)$ denote the set of implementable outcomes given the test $t$ under public and private messaging, respectively. Note that these two sets are both non-empty, since under either communication regime, there always exists a babbling equilibrium as defined below.

\begin{definition}[Babbling Equilibrium]
Under either communication regime, a babbling equilibrium is one where, conditional on the principal proposing the project, her sent message profile is independent of the realized signal, and all the agents make decisions based solely on the fact that the project is proposed.
\end{definition}

\begin{proposition}[Flexibility Is Valueless]
\label{prop:private-babbling}
Given any test $t$, under private messaging, any outcome $\outcome\in\outcomeSpace_{\mathsf{pri}}(t)$ can be implemented by a babbling equilibrium.
\end{proposition}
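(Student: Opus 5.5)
Take an arbitrary equilibrium $\rho=(\gamma,\sigma,\boldsymbol{\tau},\boldsymbol{\psi})$ under private messaging that induces $\outcome\in\outcomeSpace_{\mathsf{pri}}(t)$. The plan is to build a babbling equilibrium $\rho'$ that keeps the proposal strategy $\gamma$, and then check that it induces the same $\outcome$. By \Cref{prop:pbe_characterization}, each agent $i$ accepts with a common probability $\bar\tau_i:=\tau_i(m_i)$ for all $m_i\in\messageSpace^\rho_i$. For every agent I fix a single message $\bar m_i\in\messageSpace^\rho_i$, for example a common $\bar m\in\messageSpace$ used for everyone. The babbling strategy $\sigma'$ then sends $(\bar m,\dots,\bar m)$ after every proposal, regardless of $s$. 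Each agent's belief after $\bar m$ is set to the posterior conditional only on $P=1$, namely $\psi_i'(\bar m)$, obtained from $F$, $\pi$ and $\gamma$ by Bayes' rule. Off-path messages receive a pessimistic belief under which rejection is optimal, or any belief that makes $\tau_i'=\bar\tau_i$ optimal after them. The voting strategy is $\tau_i'\equiv\bar\tau_i$.

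The key step is to verify that $\bar\tau_i$ is sequentially rational given the belief $\psi'_i(\bar m)$. In $\rho$, agent $i$'s payoff from accepting after $m_i$ equals the probability that he is pivotal times $v_i$, integrated against the joint law of $\theta$ and the other agents' votes conditional on $(P=1,m_i)$. Since all other agents' acceptance probabilities are constant on path, their votes are independent draws with fixed probabilities $\hat\tau_j$. Hence the pivot probability is a constant independent of $\theta$, and it is strictly positive by the trembling noise. The expected gain from accepting after $m_i$ is therefore proportional to $\E[v_i(\theta)\mid P=1,m_i]$. If $\bar\tau_i\in(0,1)$, this expectation is zero for every on-path $m_i$. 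Averaging over $m_i$ by the law of iterated expectations gives $\E[v_i\mid P=1]=0$, so $\bar\tau_i$ remains optimal. If $\bar\tau_i=1$, each conditional expectation is $\ge0$, so their average is $\ge0$; the case $\bar\tau_i=0$ is symmetric. This averaging argument across messages is the crux of the proof.

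It remains to check the principal and the outcome. Once a proposal is made, every message profile yields the same vote distribution in $\rho'$, so $\sigma'$ is optimal for the principal; off-path deviations yield no more than on-path messages under the chosen off-path beliefs. At the proposing stage, the principal's expected payoff from proposing given $s$ is $\E[u(\theta)\mid s]\cdot\Pr[\text{launch}]-c$, where the launch probability is computed with the independent probabilities $\hat{\bar\tau}_j$. This is identical to her payoff in $\rho$, because there too the launch probability after a proposal did not depend on the message (\Cref{prop:pbe_characterization}). Hence $\gamma$ remains optimal. Because the launch probability is the same constant in both equilibria, $\outcome_p(\theta)=\int\gamma(s)\,\pi(ds\mid\theta)$ and $\outcome_\ell(\theta)$ coincide.

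I expect the main obstacle to be handling the off-path messages in the private regime, since an agent can receive any $m\in\messageSpace$. One must confirm that assigning these messages beliefs that induce $\bar\tau_i$ (or rejection) leaves the principal no profitable deviation. The natural choice is to copy the off-path beliefs of $\rho$ itself, or to assign to every message the same belief as $\bar m$, which makes all messages payoff-equivalent.
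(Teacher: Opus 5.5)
Your proposal is correct and follows essentially the paper's route. Like the paper, you collapse each agent's on-path messages into one message carrying the pooled posterior, keep the common acceptance probability from \Cref{prop:pbe_characterization}, and check the agent's incentive by averaging the same-signed gains from accepting across the merged messages. The differences from the paper are minor. You merge all agents at once, using \Cref{prop:pbe_characterization} for the other agents to make the pivot probability constant, whereas the paper merges one agent at a time with the other messages held fixed. You are also more explicit than the paper about off-path beliefs and the proposal stage. One imprecision: for signals with $\gamma(s)=0$, the deviation payoff in $\rho$ need not equal yours, since $\rho$ may use off-path messages. It is still at least $\E[u(\theta)\mid s]$ times the on-path launch probability minus $c$ when $\E[u(\theta)\mid s]\ge 0$, and both payoffs are negative otherwise, so forgoing remains optimal.
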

\begin{proof}
    See Appendix \ref{pf:private-babbling}.
\end{proof}

The intuition behind \Cref{prop:private-babbling} is as follows. According to \Cref{prop:pbe_characterization}, under private messaging, each agent uses the same voting strategy across all messages he may receive on the equilibrium path. Hence, without loss of generality, merging each agent's possible messages on the equilibrium path into a single message will not change the equilibrium outcome. 

\Cref{prop:private-babbling} indicates that, under private messaging, it is without loss of generality to consider \emph{babbling equilibria}, in which the principal's message beyond the proposal itself conveys no payoff-relevant information to individual agents. In other words, communication under private messaging is payoff- and information-irrelevant, as each agent's belief and strategy do not depend on the message, and the principal's payoff also does not depend on the message profile, given that a proposal is made. In this sense, the flexibility afforded by private messaging is essentially ineffective. This observation leads to our main result that public messaging implements (weakly) more outcomes than private messaging, as shown next. 

\begin{theorem}[Public Messaging Implements More]
\label{thm:public-implements-more}
Given any test $t$, any equilibrium outcome that is implementable under private messaging must also be implementable under public messaging using the same test. Formally, $\outcomeSpace_{\mathsf{pub}}(t)\supseteq \outcomeSpace_{\mathsf{pri}}(t)$.
\end{theorem}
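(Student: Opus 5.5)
Take any $\outcome\in\outcomeSpace_{\mathsf{pri}}(t)$. By \Cref{prop:private-babbling}, $\outcome$ is induced by a babbling equilibrium $\rho=(\gamma,\sigma,\boldsymbol{\tau},\boldsymbol{\psi})$ under private messaging. In $\rho$, conditional on proposal, each agent $i$ acts on the same posterior $\psi_i^*$ (the prior updated only on the event $P=1$ under $\gamma$) and votes with a single probability $\tau_i^*$. The plan is to build a public-messaging equilibrium $\rho_0$ with the same $\gamma$, in which the principal always sends one fixed public message $m^*$ after proposing, and each agent $i$ accepts $m^*$ with probability $\tau_i^*$. This makes $\rho_0$ a babbling equilibrium under public messaging.

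The first step is to verify beliefs and agents' sequential rationality. Because $m^*$ is sent with probability one after a proposal, Bayes' rule gives agent $i$ the posterior $\psi_i^*$ at $m^*$. The other agents' voting strategies are $\tau_j^*$, as before. So the distribution of the number of other accepting votes, including the trembles, is the same as in $\rho$. The event in which $i$ is pivotal therefore has the same probability. It is also independent of $\theta$ given proposal, because votes do not depend on the signal. Hence $i$'s expected payoff from accepting versus rejecting is the same as in $\rho$, and $\tau_i^*$ remains optimal.

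For messages $m\neq m^*$, which are off path, assign a belief under which rejection is optimal, such as a point mass on a state $\theta$ with $v_i(\theta)<0$, which exists by assumption. Set $\tau_i(m)=0$. Then deviating to any other public message weakly lowers every agent's acceptance probability, so $m^*$ is optimal in the messaging stage whenever the continuation value is positive.

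The main obstacle is the principal's proposal and messaging incentives. After seeing $s$, the principal's payoff from proposing is
\[
\mathbb{E}[u(\theta)\mid s]\cdot \Pr[\text{at least $k$ accept}]-c,
\]
where the acceptance probability is computed from $\hat\tau_i^*$. This equals her payoff in $\rho$. In $\rho$ the message profile does not matter for it: every on-path message gives the same acceptance probability, and off-path private deviations can be deterred by the same pessimistic beliefs.

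One point needs care. In $\rho$, after proposing she might prefer a message that lowers acceptance, for example when $\mathbb{E}[u\mid s]<0$. But then she would strictly prefer not to propose, since $c>0$. So whenever $\gamma(s)>0$, proposing is weakly optimal only if $\mathbb{E}[u\mid s]\cdot\Pr[\cdot]\ge c>0$. This gives $\mathbb{E}[u\mid s]>0$, which makes maximizing acceptance her preferred message choice. Sending $m^*$ is therefore optimal in $\rho_0$, and $\gamma$ remains optimal because the proposal payoffs coincide.

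Finally, the conditional launch probability given $\theta$ and proposal is the same function of $\boldsymbol{\hat\tau}^*$ in both equilibria, and $\gamma$ is unchanged. So $\rho_0$ induces the same $\outcome_p$ and $\outcome_\ell$, and $\outcome\in\outcomeSpace_{\mathsf{pub}}(t)$.
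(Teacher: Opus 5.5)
Your proposal is correct and takes essentially the same approach as the paper: invoke \Cref{prop:private-babbling} to reduce to a babbling equilibrium, then relabel its single on-path message profile as one common public message, keeping the same proposal rule, beliefs, and votes. Your version is more explicit than the paper's. You pin down off-path public messages with pessimistic beliefs that induce rejection, and you show that $c>0$ forces $\mathbb{E}[u\mid s]>0$ on every proposal signal. The paper's one-line remark that ``agents essentially ignore the message'' leaves both points implicit.
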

\begin{proof}
See Appendix \ref{pf:public-implements-more}. 
\end{proof}

\Cref{thm:public-implements-more} holds because any babbling equilibrium under private messaging continues to be an equilibrium under public messaging. As pointed out in the introduction, the two communication regimes are subject to a flexibility-credibility trade-off. Since the flexibility afforded by private messaging is valueless, public messaging is weakly better, as the credibility that it yields may still be valuable. 

Our analysis so far is based on a given test. If we take into account the optimal choice of test, the weak dominance of public messaging immediately follows from \Cref{thm:public-implements-more}. We define the principal's payoff from a test as her payoff in the principal-preferred equilibrium under this test. The following corollary formalizes the superiority of public communication with the optimally chosen test. 

\begin{corollary}
\label{cor:optimal_test_public_better}
The principal's payoff in the optimal test is weakly higher under public messaging than under private messaging. 
\end{corollary}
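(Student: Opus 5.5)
The plan is to derive \Cref{cor:optimal_test_public_better} directly from \Cref{thm:public-implements-more}. The key observation is that the principal's ex ante payoff depends on an equilibrium only through the outcome it induces. Given a test $t$ and an equilibrium $\rho$ with outcome $\outcome$, her expected payoff is
\begin{equation*}
\int_{\statespace} \outcome_p(\theta)\left[\outcome_\ell(\theta)\, u(\theta) - c\right] f(\theta)\dd\theta .
\end{equation*}
This holds because the cost $c$ is paid exactly when she proposes, and $u(\theta)$ is received exactly when the project is launched. In the first step I will verify this formula, checking that proposal and launch probabilities conditional on $\theta$ are well defined for a general test $t=(\signalSpace,\pi)$. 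They are obtained by integrating $\gamma(s)$ and the launch probability (induced by $\sigma$ and $\boldsymbol{\tau}$ with the $\epsilon$-noise) over $s\sim\pi(\cdot\mid\theta)$. Write $W(\outcome)$ for this payoff functional.

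In the second step, fix any test $t$. The principal's value from $t$ under regime $r\in\{\mathsf{pub},\mathsf{pri}\}$ is $V_r(t)=\sup_{\outcome\in\outcomeSpace_r(t)} W(\outcome)$, taken over the principal-preferred equilibrium. Both sets are nonempty because babbling equilibria exist. By \Cref{thm:public-implements-more}, $\outcomeSpace_{\mathsf{pri}}(t)\subseteq\outcomeSpace_{\mathsf{pub}}(t)$, so $V_{\mathsf{pub}}(t)\ge V_{\mathsf{pri}}(t)$ for every $t$. Taking the supremum over $t\in\testset$ on both sides then gives $\sup_t V_{\mathsf{pub}}(t)\ge \sup_t V_{\mathsf{pri}}(t)$. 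If an optimal private test $t^*$ attains the supremum, the conclusion is simply $V_{\mathsf{pub}}(t^*)\ge V_{\mathsf{pri}}(t^*)$, and the optimal public test does at least as well.

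The only point requiring care is existence, both of a principal-preferred equilibrium and of an optimal test. To avoid it, I would state the argument in terms of suprema, which makes the inequality hold regardless of attainment. Should the paper insist on maxima, nothing beyond the set inclusion is needed: whenever the private optimum is attained, the same outcome is available under public messaging with the same test. I do not expect a real obstacle here. The substantive work is contained in \Cref{thm:public-implements-more}, and the corollary is a monotonicity-of-supremum argument.
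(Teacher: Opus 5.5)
Your proof is correct and follows the paper's route: the paper treats the corollary as an immediate consequence of \Cref{thm:public-implements-more}. The argument combines the inclusion $\outcomeSpace_{\mathsf{pri}}(t)\subseteq\outcomeSpace_{\mathsf{pub}}(t)$ for each fixed $t$ with the fact that the principal's payoff depends on an equilibrium only through its outcome. Your phrasing in terms of suprema, to sidestep attainment of the principal-preferred equilibrium and of the optimal test, is extra care that the paper leaves implicit.
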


\section{When Is Public Communication Strictly Better}
\label{sec:linear}

\Cref{sec:general} establishes that public messaging \emph{weakly} dominates private messaging. A natural next question is: when is public communication \textit{strictly} dominant?

To answer this question in a clear manner, we focus on a canonical environment in which players' payoffs are linear in the underlying state. It is worth noting that the linearity assumption only serves to facilitate our comparison of the two regimes, as it enables us to characterize the principal's optimal test in closed form; it is not essential for the economic insights delivered in this section. 

\Cref{sec:linear} proceeds as follows. After describing the linear environment, \Cref{subsec:public_optimal} and \Cref{subsec:private_optimal} characterize the optimal test under public and private messaging, respectively. Finally, \Cref{subsec:public-strictly-better} compares the two regimes under the optimally chosen test and identifies the necessary and sufficient conditions for public messaging to be strictly dominant.

\paragraph{Linear environment.}
We normalize the principal's payoff from launching the project to
\[
u(\theta)=\theta~.
\]
That is, the states are ordered by the principal's preference. We let each agent $i$'s payoff from launching the project be 
\[
v_i(\theta):=\alpha_i\cdot \theta+\beta_i~,
\]
where $\alpha_i\in\{-1,1\}$ and $\beta_i\in(-1,1)$ are common knowledge.

The parameter $\alpha_i$ determines how agent $i$'s preference aligns with the principal's. When $\alpha_i=1$, agent $i$'s preference is \emph{positively aligned} with the principal's, as they both prefer higher $\theta$; when $\alpha_i=-1$, by contrast, agent $i$'s preference is \emph{negatively aligned} with the principal's. The parameter $\beta_i$ captures agent $i$'s state-independent \emph{bias} about the project. A larger $\beta_i$ indicates that agent $i$ is more predisposed to support the project. Moreover, we define agent $i$'s \emph{acceptance threshold} as 
\[
\delta_i:=-\beta_i/\alpha_i~,
\]
which is the value of $\theta$ to make agent $i$ indifferent between launching the project or not. To avoid uninteresting discussions, we impose a genericity assumption that all agents have distinct acceptance thresholds; that is, $\delta_i\neq \delta_j$ for any $i, j$.

This canonical environment applies broadly to a variety of contexts. As an illustration, consider a legislative committee deciding whether to pass a trade bill (e.g., a package of tariffs and consumer rebates). The bill's impact on the welfare of domestic producers and consumers is ex ante uncertain; we let a larger $\theta$ represent greater benefits for producers, accompanied by larger harms to consumers. The committee is chaired by a \textit{pro-producer} politician, while the remaining members include both \textit{pro-producer} and \textit{pro-consumer} legislators. Pro-producer members may differ in their bias towards the bill, arising from heterogeneity in their constituencies' exposure to global competition. Likewise, pro-consumer members may exhibit varying degrees of opposition, possibly depending on their ties with consumer interests.

\subsection{Optimal Test Under Public Messaging}
\label{subsec:public_optimal}

As mentioned in \Cref{sec:general}, when evaluating a test from the principal's perspective, we consider her payoff from the principal-preferred equilibrium associated with this test. Hence, under public messaging, the principal's problem is as follows. We let $t^*_\mathsf{pub}$ denote a solution to this problem: 
\begin{equation}
\max_{t\in\mathcal{T}}\max_{\outcome\in\outcomeSpace_{\mathsf{pub}}(t)}\int_\Theta \left[u(\theta)\cdot \outcome_l(\theta)-c\right]\cdot \outcome_p(\theta)\dd F(\theta)~,\tag{$\mathcal{P}_\mathsf{pub}$}\label{eq:public-obj}
\end{equation}
where $\outcome_p(\theta)$ is the expected proposal probability at state $\theta$, and $u(\theta)\cdot \outcome_l(\theta)-c$ is the principal's expected net payoff conditional on proposing at state $\theta$.

\begin{definition}[Credible Test] 
Under public messaging, a test $t$ is credible if it admits a \emph{truthful equilibrium}, defined as one in which the principal, conditional on proposing the project, truthfully reports the signal as the public message. 
\end{definition}

\begin{lemma}[Credible Tests Suffice]
\label{lem:public_revelation}
Any outcome $\outcome\in \outcomeSpace_{\mathsf{pub}}(t)$ under some test $t$ is implementable by a credible test $t'$ with the associated truthful equilibrium.
\end{lemma}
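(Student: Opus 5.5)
The plan is a revelation-principle-style garbling argument. Fix a test $t=(\signalSpace,\pi)$ and an equilibrium $\rho=(\gamma,\sigma_0,\boldsymbol{\tau},\boldsymbol{\psi})$ under public messaging that induces $\outcome$. I will build a new test $t'$ whose signal is exactly the pair ``(proposal decision, public message)'' that the principal would have produced under $\rho$. Concretely, let $\signalSpace' := \{\varnothing\}\cup \messageSpace$, where $\varnothing$ means ``do not propose.'' Define $\pi'(\cdot\mid\theta)$ as follows: draw $s\sim\pi(\cdot\mid\theta)$; with probability $1-\gamma(s)$ output $\varnothing$; otherwise output $m_0\sim\sigma_0(\cdot\mid s)$. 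This is a legitimate test, since $\pi'$ is a composition of Markov kernels and $\gamma,\sigma_0$ are measurable.

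Under $t'$ I propose the candidate truthful equilibrium $\rho'$:
\begin{itemize}
\item the principal proposes if and only if $s'\neq\varnothing$, and then announces $s'$ publicly;
\item agents use $\tau_i'(m):=\tau_i(m)$ and beliefs $\psi_i'(m):=\psi_i(m)$ on path, with off-path messages handled as described below.
\end{itemize}
The joint distribution of $(\theta,\text{proposal},m_0)$ under $(t',\rho')$ coincides with that under $(t,\rho)$ by construction. Hence Bayes' rule yields the same on-path posteriors, agents' votes are sequentially rational as before, and the induced $(\outcome_p,\outcome_\ell)$ is identical.

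The substantive step is the principal's incentives. First, the messaging stage: after proposing with signal $s'=m$, the principal's posterior is $\Pr(\theta\mid m)$, a mixture over the original signals $s$ that sent $m$. In $\rho$, every such $s$ (with $\gamma(s)>0$) weakly preferred $m$ to every other message $m'$, because $m\in\supp\sigma_0(\cdot\mid s)$ and the message choice was optimal. The principal's continuation payoff $\E[u(\theta)\mid\cdot]\cdot \Pr(L=1\mid m')$ is linear in her belief, so optimality of $m$ survives averaging over those $s$. Deviations to messages that were off path under $\rho$ remain deterrable by keeping the original off-path beliefs and votes.

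Second, the proposing stage, which I expect to be the main obstacle. For $s'=m\neq\varnothing$, each contributing $s$ had $\gamma(s)>0$, so proposing was weakly optimal given the continuation value of sending $m$; averaging again preserves weak optimality. For $s'=\varnothing$, the contributing signals are those $s$ with $\gamma(s)<1$, and each of these weakly preferred not proposing over proposing followed by the best message. This uses the fact that for a given $s$ the best continuation value is a maximum over messages of linear functionals. The maximum of linear functionals is convex, so averaging over $s$ could in principle make proposing attractive: $\E[\max]\ge\max\E$ actually goes the right way here. Because each $s$ satisfies $0\ge \max_{m'}(\cdot)-c$, we get $0\ge \E_s[\max_{m'}]-c\ge \max_{m'}\E_s[\cdot]-c$, so not proposing remains optimal at $\varnothing$. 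I will verify this inequality carefully, including the mixed case $0<\gamma(s)<1$ and measurability over a general message space, which is where I expect the care to be needed.
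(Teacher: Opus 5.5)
Your proof is correct, and it departs from the paper's at the point where the choice of construction matters. The paper takes $\signalSpace'=\signalSpace\times\messageSpace$ with $\pi'((s,m)\mid\theta)=\pi(s\mid\theta)\sigma(m\mid s)$, keeps the randomized proposal rule $\gamma(s)$, and has the principal report $(s,m)$.

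\textbf{What the paper's route buys and costs.} Because the principal's information is unchanged, her incentives transfer verbatim and no averaging argument is needed. The cost falls on the agents. Under truthful reporting they now learn $s$, so their Bayesian posterior after $(s,m)$ is the posterior given $s$. It is not the pooled posterior given $m$, even though the paper asserts the two coincide. Suppose an on-path message is sent after two proposing signals whose posteriors lead some agent to strictly prefer different votes. An example is a babbling equilibrium that pools two proposal signals with posterior means on opposite sides of some $\delta_i$. Then the prescribed votes $\tau_i(m)$ fail sequential rationality after $(s,m)$, and the paper's verification does not go through.

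\textbf{What your route buys.} Your new signal is exactly the pair (proposal decision, public message), with $\gamma$ folded into the test via $\varnothing$. This gives the agents precisely the information they had in $\rho$, so their side is immediate. The work moves to the principal, whose posterior is now a mixture over the original signals, and you handle that correctly.
\begin{itemize}
\item At $s'=m$, the proposing and messaging conditions are affine in her belief, so they survive averaging over the contributing $s$.
\item At $s'=\varnothing$, the deviation value $\sup_{m'}\E_\nu[u(\theta)]\Pr(L=1\mid m')-c$ is a supremum of affine functions of the belief $\nu$. Hence $\sup_{m'}\E_s[\,\cdot\,]\le\E_s[\sup_{m'}\,\cdot\,]\le 0$ gives the required direction, including for signals with $0<\gamma(s)<1$.
\end{itemize}
Folding $\gamma$ into the test is not cosmetic: once the signal is coarsened, the principal no longer observes $s$ and could not replicate $\gamma(s)$ herself.

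So your route establishes the lemma, whereas the paper's argument would need to be repaired along your lines.
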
 
\begin{proof}
    See Appendix~\ref{apx:public-truthful}.
\end{proof}

\Cref{lem:public_revelation} is in the spirit of the revelation principle.\footnote{Similar revelation-principle-type arguments appear independently in other models of information design with limited commitment. See, e.g., \citet{kreutzkamp2024persuasion} and \citet{li2024information}, both of which feature single-agent environments.} It allows us to restrict attention to credible tests, thereby substantially reducing the search space for the optimal test. 

We also need the following lemma that illustrates how the principal's limited commitment restricts equilibrium outcomes under public messaging. Fix a test $t$. For an equilibrium $\rho$ associated with $t$, we let $\signalSpace_{\rho}^{+}$ denote the set of signals under which the principal proposes the project with a positive probability.

\begin{lemma}[Restriction by Limited Commitment]
\label{lem:sameproposeprob}
Given a test $t$ under public messaging, in any equilibrium $\rho$, $\Pr(L=1\mid s, P=1)=\Pr(L=1\mid s', P=1), \forall s, s'\in \signalSpace_{\rho}^{+}$. That is, conditional on the project being proposed, the equilibrium probability that the project is launched is \emph{identical} across all signals in $\signalSpace_{\rho}^{+}$.
\end{lemma}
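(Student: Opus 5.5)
The plan is to mirror the argument behind \Cref{prop:pbe_characterization}, now using the principal's ordinal preference over the launch event. Fix a test $t$ and an equilibrium $\rho$ under public messaging. Conditional on proposing at signal $s$, the cost $c$ is sunk. If the principal sends public message $m_0$, her continuation payoff is
\[
\Pr(L=1\mid m_0)\cdot \E[u(\theta)\mid s].
\]
Here $\Pr(L=1\mid m_0)$ is the probability that at least $k$ agents accept when each agent $i$ accepts independently with probability $\hat\tau_i(m_0)$. This launch probability depends only on the message, not on $s$, because the agents' strategies are functions of the message alone and the mechanical types are independent of everything else.

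\textbf{Step 1: the sign of the conditional payoff.} For every $s\in\signalSpace_\rho^+$ I will show that $\E[u(\theta)\mid s]\ge 0$, and in fact that it is strictly positive. Proposing is optimal at such an $s$. The payoff from proposing is at most
\[
\max\bigl\{\E[u\mid s],0\bigr\}\cdot(\text{launch prob}) - c,
\]
while forgoing yields $0$. Since $c>0$, proposing can be optimal only if $\E[u\mid s]\cdot \Pr(L=1\mid m)\ge c>0$ for the messages used at $s$. This forces $\E[u\mid s]>0$.

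\textbf{Step 2: the same maximal launch probability.} With $\E[u\mid s]>0$, the principal at any $s\in\signalSpace_\rho^+$ strictly prefers any message with a higher launch probability. Sequential rationality therefore requires every message in the support of $\sigma_0(\cdot\mid s)$ to attain
\[
\ell^*:=\sup_{m_0\in\messageSpace}\Pr(L=1\mid m_0).
\]
This supremum is computed over all messages, including off-path ones; off-path messages carry whatever beliefs and votes the equilibrium specifies. The value $\ell^*$ does not depend on $s$, and it must be attained, since otherwise no message is optimal for a proposing type. Hence
\[
\Pr(L=1\mid s,P=1)=\ell^*\quad\text{for all } s\in\signalSpace_\rho^+,
\]
which is the claim.

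\textbf{Main obstacle.} The delicate step is Step 1, together with a measure-theoretic caveat. With a continuum of signals, the statements "$\E[u\mid s]>0$" and "every message in the support is optimal" hold only for almost every $s$ under the proposal measure. I would handle this either by noting that the equilibrium definition imposes sequential rationality pointwise at each $s$, or by stating the identity up to null sets, consistent with outcomes being defined as functions of $\theta$. The trembling noise plays no essential role here beyond guaranteeing that the launch probabilities are well defined.
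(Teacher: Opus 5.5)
Your proof is correct and follows essentially the same route as the paper. The proposal decision together with $c>0$ forces $\E[u(\theta)\mid s]>0$ on $\signalSpace_\rho^+$, and then any proposing type with a lower launch probability would profitably mimic the messages used at a signal with a higher one. Your version, which pins every on-path message to the common maximum $\ell^*$ over all messages, is a direct and slightly stronger restatement of the paper's pairwise contradiction argument.
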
 
\begin{proof}
    See Appendix~\ref{apx:sameproposeprob}.
\end{proof}

The intuition of \Cref{lem:sameproposeprob} is as follows. Given that the principal has proposed the project, launching it must yield an expected payoff higher than the proposal cost $c$, and therefore, she has an incentive to raise its launch probability through strategic communication. As a consequence, if there are two signals in $\signalSpace_{\rho}^+$, say $s$ and $s'$, such that the probability of launching the project under $s$ is higher than $s'$, then the principal, upon receiving the signal $s'$, will profitably deviate to mimic the behavior upon receiving the signal $s$.

\begin{definition}[Bi-pooled Threshold Test]
    A test $t=(\signalSpace,\pi)$ is called a bi-pooled threshold test if (a) $\signalSpace=\{s^-,s^+_1,s^+_2\}$; and (b) there is a threshold $\tilde\theta\in[-1,1]$ and a monotone partition $\Theta_1 \cup \Theta_2 \cup \Theta_3 = [\tilde\theta,1]$ such that 
    \begin{itemize}
        \item $\pi\left(s^-\mid \theta\right)=1$, $\forall \theta\in [-1,\tilde\theta)$;
        \item $\pi\left(s^+_1\mid \theta\right)=1$, $\forall \theta\in \Theta_1\cup\Theta_3$;
        \item $\pi\left(s^+_2\mid \theta\right)=1$, $\forall \theta\in \Theta_2$.
    \end{itemize}
\end{definition}

A bi-pooled threshold test features a threshold $\tau$. It returns the signal $s^-$ if the state lies below the threshold. If the state exceeds the threshold, the test exhibits a bi-pooling structure (see \citet{arieli2023optimal}): among all the states that exceed the threshold, it pools a set of intermediate states (i.e., $\Theta_2$), for which it returns the signal $s_2^+$, and a set of the remaining extreme states (i.e., $\Theta_1\cup\Theta_3$), for which it returns the signal $s_1^+$. 

Note that one degenerate case of a bi-pooled threshold test is a standard threshold test (i.e., one that simply reports whether the state is above a threshold), which, formally, corresponds to $\Theta_2$ being an empty set. Another degenerate case is a double-threshold test, which partitions the state space into three intervals using two thresholds; formally, it corresponds to $\Theta_1$ (or equivalently, $\Theta_3$) being an empty set. 

\begin{proposition}[Optimal Public Test]
\label{prop:publiclinear}
Under public messaging, there exists an optimal test $t^*_{\mathsf{pub}}=(\signalSpace^*_{\mathsf{pub}},\pi^*_{\mathsf{pub}})$ that is a \emph{bi-pooled threshold test} with $\signalSpace^*_{\mathsf{pub}}=\{s^-,s^+_1,s^+_2\}$.
The principal forgoes the project upon receiving $s^-$; upon receiving either $s^+_1$ or $s^+_2$, she proposes the project and truthfully reports the signal received to all agents.
\end{proposition}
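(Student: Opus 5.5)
The plan is to reduce the problem in three steps: to truthful equilibria, then to a one-dimensional launch-probability constraint, and finally to a feasibility question about distributions of posterior means, where the bi-pooling structure appears. By \Cref{lem:public_revelation}, I restrict attention to credible tests $t$ and their truthful equilibria, in which each proposed signal $s$ is publicly revealed. With linear payoffs, agent $i$'s vote after message $s$ depends only on the posterior mean $\mu_s=\E[\theta\mid s]$:
\begin{itemize}
\item if $\alpha_i=1$, he accepts iff $\mu_s\ge\delta_i$;
\item if $\alpha_i=-1$, he accepts iff $\mu_s\le\delta_i$.
\end{itemize}
Hence, with the trembling noise, the launch probability conditional on proposal is a function $\Lambda(\mu)$ that is piecewise constant on the finitely many intervals cut out by the distinct thresholds $\delta_1,\dots,\delta_n$. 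Indifference points can be handled by mixing and ignored at this level. By \Cref{lem:sameproposeprob}, there is a single $q$ with $\Lambda(\mu_s)=q$ for every proposing signal, so all proposing posterior means lie in the level set $M_q:=\{\mu:\Lambda(\mu)=q\}$, a finite union of intervals.

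Conversely, fix such a signal structure and $q$. Truthful reporting is incentive compatible, since every on-path message gives the same launch probability; off-path messages are deterred by pessimistic beliefs. The principal proposes at $s$ iff $q\mu_s\ge c$. For fixed $q$, the objective in \eqref{eq:public-obj} then equals
\[
\int_\Theta (q\theta-c)\,\outcome_p(\theta)\dd F(\theta),
\]
which depends only on the proposal rule $\outcome_p$. It is therefore enough to show two things: (i) an optimal proposal rule can be taken to be $\ind{\theta\ge\tilde\theta}$; and (ii) the conditional distribution $F|_{[\tilde\theta,1]}$ can be split into at most two signals whose means lie in $M_q$, realized by a bi-pooling partition.

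For (ii), let $\mu^*=\E[\theta\mid\theta\ge\tilde\theta]$. If $\mu^*\in M_q$, a single signal $s^+_1$ works; this is the standard threshold test. Otherwise $\mu^*$ lies in a gap $(a,b)$ of $M_q$, where $a$ and $b$ are the nearest points of $M_q$ below and above $\mu^*$. Any feasible splitting must place mass on means at or below $a$ and at or above $b$. Merging all signals below $\mu^*$ and all signals above $\mu^*$ shows that the two-point distribution on $\{a,b\}$ with mean $\mu^*$ is a mean-preserving contraction of that splitting, and hence of $F|_{[\tilde\theta,1]}$. This step is legitimate because the two-point law puts its support at the gap endpoints, as close to $\mu^*$ as $M_q$ allows. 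By the characterization in \citet{arieli2023optimal}, a two-point mean-preserving contraction of an atomless distribution is implemented by pooling an interval $\Theta_2$ into one mean and the two tails $\Theta_1\cup\Theta_3$ into the other, which is exactly the bi-pooled form. If the means must both fall on the same side of the gap, the degenerate double-threshold case covers it.

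For (i), I expect the main obstacle. Dropping low states in favor of higher ones raises the objective pointwise, but it can also push posterior means out of $M_q$. I would argue as follows. Take any optimal feasible $(\outcome_p,q)$ and let $\tilde\theta$ be the threshold with $F$-mass of $[\tilde\theta,1]$ equal to $\int\outcome_p\dd F$. Then $F|_{[\tilde\theta,1]}$ first-order dominates the proposed distribution, and its payoff is weakly higher since the integrand $q\theta-c$ is increasing. Feasibility would then be recovered by redoing the two-point construction of (ii) for the new mean, possibly adjusting $\tilde\theta$ or switching to another level $q'$ whose level set contains the relevant means. Compactness of the test space in the weak topology, together with the finiteness of the set of possible $q$, gives existence of an optimum.
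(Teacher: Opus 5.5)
Your fixed-$q$ reduction mirrors the paper's decomposition, and your step (ii) is correct. It is also a more elementary route than the paper's. For fixed $q$ the payoff depends only on the proposal set. So once \emph{some} split of the proposed states into means in $M_q\cap[c/q,1]$ exists, you can merge the proposal posteriors below and above $\mu^*$ and contract to the gap endpoints $\{a,b\}$. This gives a feasible two-point split with the same payoff, and $a\ge c/q$ is inherited. A sliding-interval argument then implements it as interval-plus-tails. (The sentence about both means falling on the same side of the gap is vacuous.) The paper instead reads off both the threshold and the two-signal structure from the Dworczak--Martini dual function $\phi_j$ of each penalized subproblem.

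The genuine gap is step (i), which is exactly the ``threshold'' half of the statement. Replacing an optimal proposal set by the upper set of equal mass raises its conditional mean. Nothing ensures that the new conditional distribution admits \emph{any} split into $M_q\cap[c/q,1]$, so (ii), which presupposes a feasible split, cannot simply be redone. For example, take $F=U[-1,1]$ and admissible proposal region $(c/q,0.2]$. The set $[-0.3,0.5]$ has mean $0.1$, but the upper set of equal mass, $[0.2,1]$, has mean $0.6$ and admits no admissible split. Moving $\tilde\theta$ or switching to another level $q'$ changes the payoff (a lower $q'$ lowers it), and you give no argument that some feasible threshold does at least as well. That claim is the entire content of (i).

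What is missing is an argument that uses optimality rather than first-order dominance. One that works is the following.
\begin{itemize}
\item Fix $q$, take an optimal test (existence by your compactness remark), and merge all non-proposal signals into one signal $N$ with mean $\le c/q$.
\item Suppose some mass of a state $\theta_H$ lies in $N$, while mass of a lower state $\theta_L$ lies in a proposal signal $P$ whose mean $\mu_P$ lies in a component $[l,r]$ of $M_q$. Exchange $\varepsilon$ of each.
\item If $\mu_P<r$, then $P$ stays admissible, $N$'s mean falls, and the payoff rises by $q\varepsilon(\theta_H-\theta_L)$.
\item If $\mu_P=r$, additionally move $\eta=\varepsilon(\theta_H-\theta_L)/(r-\theta_0)$ of some state $\theta_0<c/q$ from $N$ to $P$; such a state exists in $N$ because $F$ is atomless. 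This keeps $\mu_P=r$. It lowers $\int(\theta-c/q)$ over the mass in $N$ by $\varepsilon(\theta_H-\theta_L)(r-c/q)/(r-\theta_0)$. It raises the payoff by $\varepsilon(\theta_H-\theta_L)(qr-c)/(r-\theta_0)>0$, contradicting optimality.
\end{itemize}
Hence every optimum for fixed $q$ already has an upper-set proposal region. That region is feasible because it is the optimum's own, and your step (ii) then completes the proof.
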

\begin{proof}
    See Appendix~\ref{apx:publiclinear}.
\end{proof}

To explain \Cref{prop:publiclinear}, which may appear a bit encrypted, we begin with a sketch of the proof, followed by some intuitive interpretation of the result.

\subsubsection{Proof Sketch of \Cref{prop:publiclinear}}

We sketch the proof using the following example.

\begin{example}
\label{eg1}
Consider an instance:
(a) There are $n=3$ agents. \\
(b) The number of approvals needed to launch the project is $k=1$. \\
(c) The proposal cost is negligible. That is, $c\to 0$.\footnote{Note that we let $c\to 0$ in this example only to simplify the exposition; neither our results nor the proof relies on this assumption.} \\
(d) Agent $1$'s preference is negatively aligned with the principal's, while the other two agents' are positively aligned. Additionally, their acceptance thresholds satisfy $0<\delta_1<\delta_2<\delta_3<1$. 
\end{example}

According to \Cref{lem:public_revelation}, it is without loss of generality that the principal adopts a credible test and truthfully reports the test result as the public message. In a truthful equilibrium, all players form the same posterior belief about the state whenever the project is proposed. Hence, we can determine the principal's posterior payoff as a function of the posterior mean belief about the state, denoted by $\mu$. 

\Cref{subfig:optimal-public-1} illustrates the principal's posterior payoff function. If $\mu\leq 0$, she forgoes the project in the first place, yielding her a posterior payoff of $V(\mu)=0$. If $\mu>0$, she proposes the project, and her posterior payoff is $V(\mu) = \mu \cdot p(\mu)$, where $p(\mu)$ is the probability of launching the project. Specifically, for the four intervals that $\mu$ may fall into, $[0,\delta_1]$, $[\delta_1,\delta_2]$, $[\delta_2,\delta_3]$, and $[\delta_3,1]$, the numbers of approvals from rational agents are $1,0,1,$ and $2$, respectively. Taking into account the trembling-hand error $\epsilon>0$, the corresponding probabilities of launching the project are $p_1, p_0, p_1$, and $p_2$, respectively, with
\begin{equation*}
    p_0:=1-\left(1-\frac{\epsilon}{2}\right)^3~,\quad 
    p_1:=1-\left(1-\frac{\epsilon}{2}\right)^2\left(\frac{\epsilon}{2}\right)~,\quad \mathrm{and}\quad
    p_2:=1-\left(1-\frac{\epsilon}{2}\right)\left(\frac{\epsilon}{2}\right)^2~.
\end{equation*}
Graphically, the posterior payoff function $V(\mu)$ exhibits a piecewise linear structure, as depicted in \Cref{subfig:optimal-public-1}.

\input{figs/EC-2026/optimal-public}

If the principal had full commitment, we could apply the technique developed by \citet{dworczak2019simple} (henceforth, the DM Theorem) to characterize the optimal public test for this example. In our setting, however, the principal has limited commitment, which gives rise to an additional requirement that all signals that induce the principal's proposal must yield the same probability of launching the project, as shown in \Cref{lem:sameproposeprob}. To address this additional requirement, we adopt the following procedure to solve the optimal public test. 

\paragraph{Step 1: Decomposition.} We decompose the principal's problem into three subproblems, each of which restricts the launching probability to be exactly one of $p_0$, $p_1$, or $p_2$. We impose this restriction as a support constraint on the posterior mean. 
For instance, in the subproblem targeting $p_0$, the posterior mean of a proposal signal must lie in the corresponding range $[\delta_1,\delta_2]$

For each subproblem, we construct an auxiliary posterior payoff function, where we set the principal's posterior payoff to be $-\infty$ for the ``banned support'' (i.e., the region of posterior means that should not support a proposal signal). For the subproblem targeting $p_0$, for instance, the auxiliary posterior payoff function $V_0(\mu)$ is identical to the original posterior payoff function when $\mu\in [-1,0] \cup [\delta_1, \delta_2]$, and equals to $-\infty$ when $\mu \in (0,\delta_1)\cup(\delta_2, 1]$. We then apply the DM Theorem to characterize the optimal test of this subproblem,\footnote{Note that the DM Theorem does not apply immediately to the subproblem, since the posterior payoff function does not satisfy the regularity condition specified in \citep{dworczak2019simple}. However, we manage to generalize the DM Theorem to our irregular setting by adopting a finite-penalty regularization, which we will discuss in detail in the proof (see Appendix~\ref{apx:publiclinear}).} which is denoted by $t_0^*$ and will become a candidate solution to the original problem. 

Applying the same procedure, we derive the candidate solutions from the other two subproblems, which we denote by $t_1^*$ and $t_2^*$. An important feature is that every candidate solution is a bi-pooled threshold test (or its degeneration), as illustrated in \Cref{subfig:optimal-public-2,subfig:optimal-public-3,subfig:optimal-public-4}. This structure arises from the fact that the auxiliary posterior payoff function of each subproblem is a ``segmented line'' when $\mu\geq 0$, which we will discuss in detail in \Cref{sec:publiclinearinterpret}. 

\paragraph{Step 2: Consolidation.} We compare the principal's payoffs from the three candidate solutions, $\{t_0^\ast,t_1^\ast,t_2^\ast\}$, and select the maximizer, which is the optimal public test of the original problem. 

Although the proof sketch is based on \Cref{eg1}, the decomposition and consolidation procedure applies more generally and exactly drives \Cref{prop:publiclinear}.

\subsubsection{Interpretation of \Cref{prop:publiclinear}}
\label{sec:publiclinearinterpret}

Having elaborated the proof idea behind \Cref{prop:publiclinear}, we now provide some explanation on why the optimal test must be a bi-pooled threshold test. First, the threshold $\tau$ captures the fact that the principal is willing to launch the project only when the state is sufficiently favorable to herself. 

Second, when the principal proposes the project, she may need more than one proposal signal to secure the agents' acceptance, precisely because of the agents' divergent preferences. As exemplified by \Cref{subfig:optimal-public-3}, where the optimal test consists of two proposal signals,\footnote{Notably, there exist some priors, $t_1^\ast$ is the optimal public test for the original problem in \ref{eg1}.} it harms the principal to merge the two proposal signals: while the signals $s_1^+$ and $s_2^+$ induce the acceptance of Agents 1 and 2, respectively, their merger will cause both agents to vote no to the project. 

Third, although one proposal signal may not be enough, two proposal signals will be sufficient. This result relies on two features of the model in this section. First, the principal's posterior payoff is a collection of several segmented lines, which arise from the linear environment. Second, the principal's limited commitment, which requires that the launching probability must be identical across proposal signals, enables us to examine only one segmented line in each subproblem. As a consequence of this structure, the optimal test induces at most two posterior means from the proposal signals, as formally shown in the proof. 

Fourth, to implement two proposal signals in a test, we can simply adopt the bi-pooling structure for the states above the threshold $\tau$. This is a well-established result from the literature (see \citet{arieli2023optimal}). 

We conclude this subsection with the following observation: the optimal public test does not necessarily come from the subproblem targeting the highest launching probability. Intuitively, the principal faces a trade-off between the level of support (the launching probability) and the size of the admissible support. In particular, a higher launching probability may correspond to a narrower admissible support, which imposes a more restrictive support constraint on the subproblem, making it undesirable. Indeed, the optimal public test must strike the best balance regarding this trade-off.

\subsection{Optimal Test Under Private Messaging}
\label{subsec:private_optimal}

Under private messaging, the principal's problem is as follows. We let $t^*_\mathsf{pri}$ denote a solution to this problem:
\begin{equation}
\max_{t\in\mathcal{T}}\max_{\outcome\in\outcomeSpace_{\mathsf{pri}}(t)}\int_\Theta \left(u(\theta)\cdot \outcome_l(\theta)-c\right)\cdot \outcome_p(\theta)\dd F(\theta)~.\tag{$\mathcal{P}_\mathsf{pri}$}\label{eq:private-obj}
\end{equation}
\begin{lemma}[Binary Tests Suffice]
\label{lem:private-binary-test}
Under private messaging, for any outcome $\outcome$ that is implementable under some test $t=(\signalSpace,\pi)$, there exists another test $t'=(\signalSpace',\pi')$ with a binary signal space $\signalSpace' = \{s^-,s^+\}$ that implements~$\outcome$.
\end{lemma}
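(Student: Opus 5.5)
By \Cref{prop:private-babbling}, I will start from a babbling equilibrium $\rho=(\gamma,\sigma,\boldsymbol{\tau},\boldsymbol{\psi})$ under $t$ that implements $\outcome$, and then build the binary test by garbling $t$. In $\rho$, each agent $i$ receives (without loss) a single on-path message $\bar m_i$, independent of $s$, and votes $\tau_i(\bar m_i)$. Because the message profile conveys nothing about $s$ beyond the fact of a proposal, the probability of launch conditional on proposal is a constant $\bar p$. This holds for every signal, since the realized messages and votes do not depend on $s$.

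The garbling $t'=(\{s^-,s^+\},\pi')$ is defined by
\[
\pi'(s^+\mid\theta):=\int_{\signalSpace}\gamma(s)\,\mathrm{d}\pi(s\mid\theta)=\outcome_p(\theta),\qquad \pi'(s^-\mid\theta):=1-\outcome_p(\theta).
\]
In words, $s^+$ is the event ``the original principal would have proposed.'' Under $t'$ I specify the following strategies. The principal proposes if and only if $s^+$, and then sends the profile $(\bar m_1,\dots,\bar m_n)$. Each agent keeps $\tau_i$, and his on-path belief is $\psi_i(\bar m_i)$. Off-path beliefs are the same as in $\rho$.

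Three properties need to be checked.
\begin{enumerate}[label=(\roman*)]
\item \textbf{Agents' beliefs are consistent.} The posterior over $\theta$ given a proposal under $t'$ equals the posterior given a proposal under $t$. Both are proportional to $\outcome_p(\theta)\,\mathrm{d}F(\theta)$, and in $\rho$ the message adds no further information. Hence $\psi_i(\bar m_i)$ is Bayes-consistent, and $\tau_i$ remains sequentially rational. The pivotal events are also unchanged, because the other agents' strategies are the same.
\item \textbf{Messaging is optimal.} Conditional on a proposal, the principal wants to maximize the launch probability. Deviations are handled exactly as in $\rho$: any other message to agent $i$ induces acceptance no higher than $\tau_i(\bar m_i)$, by \Cref{prop:pbe_characterization} and the optimality of the messaging strategy in $\rho$.
\item \textbf{The proposal decision is optimal.} This is the main obstacle, and I expect the following argument to close it. The principal proposes after $s$ with positive probability only if $\E[u(\theta)\mid s]\,\bar p\ge c$, and she proposes with probability one whenever the inequality is strict. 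Write $S^+$ for the set of signals with $\gamma(s)>0$ and $S^0$ for the set with $\gamma(s)<1$. Then the conditional payoff is $\ge c$ on $S^+$ and $\le c$ on $S^0$.

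Averaging over the signals that generate $s^+$, weighted by $\gamma$, gives $\E[u(\theta)\mid s^+]\,\bar p\ge c$. Averaging over the signals that generate $s^-$, weighted by $1-\gamma$, gives $\E[u(\theta)\mid s^-]\,\bar p\le c$. So proposing after $s^+$ and forgoing after $s^-$ are optimal.
\end{enumerate}

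One delicate case remains: the case $\outcome_p\equiv0$, or more generally a signal $s^+$ that is never realized. There I take the babbling equilibrium with an arbitrary message and let $s^-$ carry all the mass, using the permissive off-path beliefs.

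Finally, the launch probability conditional on proposal is $\bar p$ at every state, and the proposal probability is $\outcome_p(\theta)$. Therefore $t'$ implements $\outcome$.
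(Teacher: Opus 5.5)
Your proposal is correct and follows essentially the same route as the paper. Both arguments reduce to a babbling equilibrium via \Cref{prop:private-babbling}, garble the test into $s^+$ with probability $\sum_s \pi(s\mid\theta)\gamma(s)=\outcome_p(\theta)$, keep the message profile, votes and beliefs unchanged, and verify the proposal decision by the convex-combination argument over signals with $\gamma(s)>0$ and with $\gamma(s)<1$. Your handling of off-path message deviations and of a never-realized $s^+$ is slightly more careful than the paper's, but it is the same proof.
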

\begin{proof}
    See Appendix~\ref{apx:private-binary-test}.
\end{proof}

\Cref{lem:private-binary-test} follows directly from \Cref{prop:private-babbling}. Intuitively, under private messaging, the principal's message conveys no payoff-relevant information to each agent, and therefore, it is without loss of optimality to use a binary test that returns either a proposal signal, which induces the principal to propose the project, or a null signal, which induces her to forgo it. 

It turns out that in the linear environment, an optimal binary test takes the form of a threshold test, as shown in \Cref{prop:privatelinear}.

\begin{definition}[Threshold Test]
A test $t=(\signalSpace,\pi)$ is a threshold test if $\signalSpace=\{s^-,s^+\}$ and there is a threshold $\tilde\theta\in[-1,1]$ such that 
\begin{itemize}
    \item $\pi\left(s^-\mid \theta\right)=1$, $\forall \theta\in [-1,\tilde\theta)$;
    \item $\pi\left(s^+ \mid \theta\right)=1$, $\forall \theta\in [\tilde\theta,1]$.
\end{itemize}
\end{definition}

\begin{proposition}[Optimal Private Test]
\label{prop:privatelinear}
Under private messaging, there exists an optimal test $t^*_{\mathsf{pri}}=(\signalSpace^*_{\mathsf{pri}},\pi^*_{\mathsf{pri}})$ that is a \emph{threshold test} with $\signalSpace^*_{\mathsf{pri}}=\{s^-,s^+\}$.
The principal proposes the project and truthfully reports the signal to all agents upon receiving $s^+$, while forgoing the project upon receiving $s^-$. 
\end{proposition}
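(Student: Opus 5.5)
By \Cref{lem:private-binary-test} and \Cref{prop:private-babbling}, I will restrict attention to binary tests $t=(\{s^-,s^+\},\pi)$ with a babbling equilibrium. In such an equilibrium the principal proposes only upon $s^+$. Upon a proposal, every agent holds the same posterior, namely the conditional distribution of $\theta$ given $s^+$. Write $q(\theta):=\pi(s^+\mid\theta)\in[0,1]$ and $\mu:=\E[\theta\mid s^+]$. Because payoffs are linear, agent $i$'s rational vote depends only on whether $\alpha_i\mu+\beta_i\ge 0$. The launch probability conditional on proposal is therefore a constant $p(\mu)$ that is piecewise constant in $\mu$, with breakpoints at the thresholds $\delta_i$; it does not depend on $\theta$. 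The principal's payoff is then
\[
\int_\Theta (\theta\, p(\mu)-c)\,q(\theta)\dd F(\theta) \;=\; Q\cdot(\mu\, p(\mu)-c),\qquad Q:=\int q\dd F .
\]
The proposal constraint is sequential rationality at $s^+$, i.e.\ $\mu p(\mu)\ge c$ (with the obvious choice at $s^-$). The principal-preferred selection lets ties in the voting stage be broken favorably, which I will use freely.

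The main step is a rearrangement argument. I fix the target launch probability $p$ and the interval $I=[a,b]$ of posterior means that generates it, taking the closure and breaking ties at the endpoints favorably. I then maximize $\int(\theta p-c)q\dd F$ over $q$ subject to $\E[\theta\mid s^+]\in I$. The constraint $\mu\le b$ should be handled first. Dropping it, the unconstrained maximizer is $q=\mathbbm 1\{\theta\ge c/p\}$, which is a threshold test. If that test's mean exceeds $b$, I instead apply a Lagrangian/bang-bang argument to the linear program in $q$. The problem maximizes $\int(\theta p-c)q\dd F$ subject to $\int(\theta-a)q\dd F\ge0$ and $\int(b-\theta)q\dd F\ge 0$. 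Its pointwise Lagrangian is affine in $\theta$, so the optimal $q$ is an indicator of an interval on which an affine function is positive. That interval is either an upper tail $[\tilde\theta,1]$ or a lower tail.

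The obstacle is the case where the upper bound $b$ binds. There the optimizer might be a lower tail, or a middle interval, which is a double-threshold set and not a threshold test. I will rule this out as follows. Suppose the optimum pools a middle interval $[\theta_1,\theta_2]$ with mean in $I$. I will show that replacing it with the upper tail $[\tilde\theta,1]$ is weakly better in the original problem, since it uses a different target $p$. This is where I must use the structure of $p(\cdot)$, and I am least sure of it. Alternatively, I could exploit that the principal strictly prefers higher states and argue that when $b$ binds, shrinking to the upper tail with mean exactly $b$ gives the same $p$. Among sets of $F$-mass and mean constraints, the upper tail has the highest mean for a given mass. So if a set with mean $\le b$ yields payoff $Q(\mu p-c)$, I can compare it with the upper tail $[\tilde\theta,1]$ whose mean is the largest attainable value $\le b$… Making this comparison rigorous, and handling the case where $b<\E[\theta]$ prevents any upper tail from having mean in $I$, is the crux. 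If the direct argument fails, I will fall back on a case analysis over which agents' thresholds lie above or below the chosen $\mu$.

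Finally, I will verify the equilibrium. In the threshold test I construct, the principal proposes exactly on $s^+$ because $\mu p\ge c$ there, and $0$ is optimal after $s^-$. Truthful reporting of $s^+$ to every agent is a babbling equilibrium. By \Cref{thm:public-implements-more} this profile is valid, and the outcome matches the optimum of the program. Existence of the maximizing threshold follows from continuity of $F$ and compactness of $[-1,1]$ on each of the finitely many $p$-pieces.
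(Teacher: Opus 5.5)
Your reduction to binary tests with a babbling equilibrium, and the payoff formula $Q\,(\mu\,p(\mu)-c)$, are correct. The argument has a genuine gap exactly where you locate the crux, and its source is the constraint you dismiss as ``the obvious choice at $s^-$''. In a babbling equilibrium, proposing after $s^-$ yields the same launch probability $p(\mu)$. Sequential rationality therefore also requires $\mathbb{E}[\theta\mid s^-]\,p(\mu)\le c$.

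Your linear program in $q$ omits this constraint, so its solutions need not be equilibria. Suppose the lower end $a$ of a $p$-piece binds (for instance, the threshold of a positively aligned agent). Then the program selects an upper tail $[\tilde\theta,1]$ with $\tilde\theta>c/p$ and conditional mean $a$. Yet $\mathbb{E}[\theta\mid\theta<\tilde\theta]$ can exceed $c/p$: take $F$ concentrated on $[0.2,0.6]$ and $a=0.5$, so that $\tilde\theta\approx 0.4$ and $\mathbb{E}[\theta\mid\theta<\tilde\theta]\approx 0.3$. Your closing claim that forgoing is optimal after $s^-$ is thus false in general.

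Conversely, this constraint together with $\mu\,p(\mu)\ge c$ gives $\mu\ge\mathbb{E}[\theta\mid s^-]$. Hence $\mu\ge\mathbb{E}[\theta]$ whenever $s^-$ has positive probability. This disposes of your worry about $b<\mathbb{E}[\theta]$ and about lower-tail solutions. Note also that with an affine pointwise Lagrangian a middle interval never arises anyway.

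The missing idea is to hold the posterior mean $\mu$ fixed, not the $p$-piece. Take any equilibrium binary test with $s^+$-mean $\mu\ge\mathbb{E}[\theta]$. Pick $r^*$ with $\mathbb{E}[\theta\mid\theta\ge r^*]=\mu$. It exists because $r\mapsto\mathbb{E}[\theta\mid\theta\ge r]$ is continuous and nondecreasing, starting at $\mathbb{E}[\theta]$ and rising to the top of the support of $F$.

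The dual of the fact you quote is that, among all $q$ with conditional mean $\mu$, the upper tail has the largest mass. This gives $Q'\ge Q$. Since $\mu$, and hence $p(\mu)$, is unchanged and $\mu\,p(\mu)-c\ge 0$, the payoff weakly rises. Nothing about the shape of $p(\cdot)$ is needed.

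Finally, $Q\mu+(1-Q)\,\mathbb{E}[\theta\mid s^-]=\mathbb{E}[\theta]$, and the map $Q\mapsto(\mathbb{E}[\theta]-Q\mu)/(1-Q)$ is nonincreasing when $\mu\ge\mathbb{E}[\theta]$. So the new $s^-$ mean is weakly lower, and the no-proposal condition after $s^-$ survives.

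This is the paper's route; the paper leaves this last check implicit. Your decomposition by $p$-pieces with a Lagrangian is unnecessary. As formulated, it optimizes a relaxation whose maximizer may not be implementable.
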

\begin{proof}
    See Appendix~\ref{apx:privatelinear}.
\end{proof}

\Cref{prop:privatelinear} suggests that the principal proposes the project if and only if the state exceeds a certain threshold. The intuition is as follows. In the linear environment, the payoff-relevant features of a binary test $t$ are (a) the probability of the signal $s^+$ being realized, which we denote by $p_t$, and (b) the posterior mean belief upon the realization of $s^+$, which we denote by $\mu_t$. Consider, by contradiction, a binary test $t'$ that does not admit the threshold form. We can always construct an alternative threshold test $t''$ such that $\mu_{t''} = \mu_{t'}$ and $p_{t''} > p_{t'}$. This construction strictly improves the principal's payoff, explaining why the optimal private test must take a threshold form.

\subsection{When Is Public Messaging Strictly Better}
\label{subsec:public-strictly-better}

Having characterized the optimal test under each communication regime, we are now ready to address the key question of \Cref{sec:linear}: when does public messaging \textit{strictly} dominate private messaging? As suggested by \Cref{prop:publiclinear} and \Cref{prop:privatelinear}, the optimal public test is a bi-pooled threshold test, whereas the optimal private test is a standard threshold test. Hence, the question boils down to the following: when is the optimal public test not degenerate to a standard threshold test, or in other words, when does it feature two proposal signals? 

To gain some intuition for this question, we investigate \Cref{subfig:optimal-public-3}, where the optimal public test features two proposal signals, $s_1^+$ and $s_2^+$. Roughly speaking, such a test is optimal because there exists an agent who accepts $s_1^+$ but not $s_2^+$, and another agent who does the opposite. This observation leads us to \Cref{thm:strict}, which identifies a necessary and sufficient condition for public messaging to strictly dominate private messaging. 

\begin{definition}[Conflicting Allies]
\label{def:conflicting-allies}
    With linear payoffs, two agents $i\neq j$ are the principal's \emph{conflicting allies} if 
    \[
    \alpha_i=-1,\quad \alpha_j=1,\quad \mathrm{and}\quad 0<\delta_i<\delta_j<1~.
    \]
\end{definition}
To explain \Cref{def:conflicting-allies}, we let the proposal cost be negligible, i.e., $c\to 0$. Two agents are the principal's conflicting allies if they satisfy two conditions. First, they are both the principal's allies, where an ally is defined as an agent whose acceptance region (the set of posterior mean beliefs to make him accept the proposal) overlaps with the principal's acceptance region, which is $(0,1)$ when $c$ is arbitrarily small. This is reflected in the definition as both $\delta_i>0$ and $\delta_j>0$. Second, these two allies are in conflict in the sense that their acceptance regions do not overlap. This is captured in the definition as $\alpha_i=-1$, $\alpha_j=1$, and $\delta_i<\delta_j$. \Cref{fig:conflicting-allies} illustrates the definition.

% =========================================================
% 图 1：冲突盟友 (Conflicting Allies)
% =========================================================
\begin{figure}[t]
    \centering
    \begin{tikzpicture}[scale=1.2] 
        
        % --- 定义视觉坐标 ---
        \coordinate (start) at (-1, 0);     % -1
        \coordinate (zero) at (0, 0);       % 0
        \coordinate (delta_i) at (1.2, 0);  % delta_i = 0.3
        \coordinate (delta_j) at (2.8, 0);  % delta_j = 0.7
        \coordinate (end) at (4, 0);        % 1.0

        % --- 绘制 Agent i 的区间 (红色) ---
        \draw[red, thick, pattern=north east lines, pattern color=red] 
            (start) to[out=60, in=120, looseness=0.6] 
            node[midway, above=2pt, text=red, font=\small] {$A_i=[-1,\delta_i]$} 
            (delta_i) -- cycle;

        % --- 绘制 Agent j 的区间 (蓝色) ---
        \draw[blue, thick, pattern=north west lines, pattern color=blue] 
            (delta_j) to[out=60, in=120, looseness=1.1] 
            node[midway, above=2pt, text=blue, font=\small] {$A_j=[\delta_j,1]$}
            (end) -- cycle;

        % --- 绘制 Gap 花括号 ---
        \draw [decorate, decoration={brace, amplitude=5pt}, thick]
            (1.2, 0.15) -- (2.8, 0.15) 
            node [midway, above=3pt, align=center, font=\small, black] {gap};

        % --- 绘制坐标轴 ---
        \draw[thick, decorate, decoration={zigzag, segment length=5pt, amplitude=2pt}] 
            (-1, 0) -- (-0.4, 0);
        \draw[thick, ->, >=Stealth] (-0.4, 0) -- (4.2, 0) node[right] {$\mu$};

        % --- 刻度与标签 ---
        \foreach \x/\lbl in {-1/-1, 0/0, 4/1} {
            \draw[thick] (\x, 0.1) -- (\x, -0.1);
            \node[anchor=north, inner sep=0pt, yshift=-8pt, font=\small] at (\x, 0) {$\lbl$};
        }
        \foreach \x/\lbl in {1.2/\delta_i, 2.8/\delta_j} {
            \fill[black] (\x, 0) circle (2pt);
            \node[anchor=north, inner sep=0pt, yshift=-8pt, font=\small] at (\x, 0) {$\lbl$};
        }
    \end{tikzpicture}
    \caption{Illustration of two agents $i \neq j$ being the principal's conflicting allies. The red interval indicates Agent $i$'s acceptance region, while the blue interval indicates Agent $j$'s.}
    \label{fig:conflicting-allies}
\end{figure}

\begin{theorem}[Strict Dominance of Public Messaging]
\label{thm:strict}
In the linear environment, for any trembling-hand error $\epsilon>0$, the following are equivalent.
\begin{enumerate}[label=(\alph*)]
\item There exists a prior distribution $F$ and a cost $c>0$ such that public messaging is strictly better than private messaging. 
\item There exist two agents that are the principal's conflicting allies.
\end{enumerate}
\end{theorem}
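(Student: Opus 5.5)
We prove the two directions separately. Throughout we use \Cref{lem:public_revelation} and \Cref{lem:sameproposeprob}: under public messaging we may restrict attention to credible tests with truthful equilibria, and all proposal signals must induce the same launch probability. We also use \Cref{prop:publiclinear}, \Cref{lem:private-binary-test} and \Cref{prop:privatelinear}. The optimal private test is a threshold test. A threshold test is itself a credible public test with a single proposal signal, and by \Cref{thm:public-implements-more} it yields the same payoff under public messaging. Hence public messaging is strictly better if and only if, at the optimum, the public regime strictly gains from a test with two proposal signals $s^+_1,s^+_2$ whose posterior means $\mu_1<\mu_2$ are both positive (otherwise proposing is unprofitable) and induce the same launch probability.

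\textbf{(b)$\Rightarrow$(a).} Let $i,j$ be conflicting allies, so $\alpha_i=-1$, $\alpha_j=1$ and $0<\delta_i<\delta_j<1$. Agent $i$ accepts iff $\mu\le\delta_i$, and agent $j$ accepts iff $\mu\ge\delta_j$. We will build a prior and a cost $c$ under which a two-signal public test strictly beats every single-signal (equivalently, every private) test.

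First, take $F$ concentrated near two regions: a mass around some $\mu_1\in(0,\delta_i)$ and a mass around some $\mu_2>\delta_j$, with negligible mass elsewhere. Place $\mu_1$ and $\mu_2$ so that, among the other agents, the number of rational acceptors at $\mu_1$ equals that at $\mu_2$. Then the total number of rational acceptors, and hence the launch probability, is the same at both posteriors, and the public test revealing the region is credible.

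Second, choose $k$-relevant structure through $c$. Set $c$ just below $\mu_1 p$, where $p$ is the common launch probability. Then the principal is willing to propose at both signals, but any pooled posterior mean in $(\delta_i,\delta_j)$ loses the support of both $i$ and $j$.

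We cannot choose $k$ or the other agents, so to equalize counts we use only the thresholds: we pick $\mu_1$ just below $\delta_i$ and $\mu_2$ just above $\delta_j$, then compare the acceptor counts. Agents with thresholds in between $\mu_1$ and $\mu_2$ change their acceptance between the two posteriors. The key combinatorial step is to choose consecutive thresholds $\delta_i<\delta_j$ (among all agents' thresholds, restricted to conflicting pairs) so that no other threshold lies between them. If another agent's threshold lies between, it is either a positively aligned agent above $\delta_i$ or a negatively aligned agent below $\delta_j$, and either one forms a conflicting pair with $i$ or with $j$ at closer distance. So a closest conflicting pair exists, and with it the counts coincide: $i$ is lost and $j$ is gained. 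With $\epsilon>0$, the strictly lower launch probability of any single-signal test (which has at most one of $i,j$ accepting, or a smaller support set), together with the threshold-test comparison, gives a strict gap for $c$ near the boundary. This choice of prior and $c$ making the gap strict, for every $\epsilon$, is the main technical obstacle. I would try atom-like priors and compute payoffs explicitly.

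\textbf{(a)$\Rightarrow$(b).} Suppose there are no conflicting allies. Take an optimal public bi-pooled test with two proposal posteriors $0<\mu_1<\mu_2$ and equal launch probability; we will show it is weakly dominated by a threshold test. Equal launch probability means equal numbers of rational acceptors. The number of rational acceptors, as a function of $\mu$ on $(0,1]$, gains positively aligned agents at their thresholds and loses negatively aligned agents at theirs. Without conflicting allies, every negatively aligned agent with $\delta>0$ has its threshold above every positively aligned threshold in $(0,1)$. So on $(0,1)$ the count is first nondecreasing and then nonincreasing, i.e.\ quasi-concave. Equal counts at $\mu_1$ and $\mu_2$ then imply that the count on $[\mu_1,\mu_2]$ is at least this common value, so pooling $s_1^+$ and $s_2^+$ launches with weakly higher probability. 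Since the principal's payoff $\mu p - c$ is linear in the mean at a fixed $p$, merging weakly helps, and so does the threshold improvement argument behind \Cref{prop:privatelinear}. Hence a single-signal test is optimal, and by \Cref{prop:privatelinear} the two regimes tie.
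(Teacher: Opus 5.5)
\paragraph{Direction (b)$\Rightarrow$(a): the strict comparison is missing.}
Your reduction to an \emph{adjacent} conflicting pair is exactly the paper's step: no other threshold lies between $\delta_i$ and $\delta_j$, since an intermediate agent of either sign would form a closer pair. The two-cluster prior is also the right construction. But this direction is not actually proved. The strict comparison you defer as ``the main technical obstacle'' is the entire content of the direction, and it is short.

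Let $\bar n$ be the number of rational supporters on $(\delta_i,\delta_j)$, and let $p_m$ be the launch probability with $m$ rational supporters. Put equal masses in small clusters around $\mu_1\lesssim\delta_i$ and $\mu_2\gtrsim\delta_j$, and let $c$ be small.
\begin{itemize}
\item Revealing the cluster gives both posteriors $\bar n+1$ supporters. It earns about $\tfrac{\mu_1+\mu_2}{2}p_{\bar n+1}-c$.
\item Under private messaging there is a single proposal posterior $\mu^+$. If $\mu^+\in(\delta_i,\delta_j)$, only $\bar n$ rational agents accept, so the payoff is at most about $\tfrac{\mu_1+\mu_2}{2}p_{\bar n}$.
\item If $\mu^+$ lies outside the gap, the test must leave almost all of one cluster unproposed. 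The non-proposal posterior is then at least about $\mu_1>0$. For small $c$ the principal would deviate to propose after $s^-$, so no such test is an equilibrium.
\end{itemize}
Strictness follows from $p_{\bar n+1}>p_{\bar n}$. This holds for every $\epsilon\in(0,1)$ because each agent is pivotal with positive probability.

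Your choice of $c$ ``just below $\mu_1 p$'' is unnecessary and only makes the comparison more delicate. The paper simply lets $c\to 0$. It places the clusters exactly at $\delta_d,\delta_{d+1}$ and lets the indifferent agents accept; your interior placement avoids that tie-breaking.

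\paragraph{Direction (a)$\Rightarrow$(b): a more direct route, but an equilibrium condition is skipped.}
Your unimodality argument is more direct than the paper's. The paper goes through support gaps: strict dominance $\Rightarrow$ merging the two proposal signals must strictly lower the number of supporters $\Rightarrow$ a support gap $\Rightarrow$ adjacent thresholds with $\alpha=-1$ then $\alpha=1$ $\Rightarrow$ conflicting allies. You observe directly that without conflicting allies, all positively aligned thresholds in $(0,1)$ precede all negatively aligned ones there. Hence merging cannot lose supporters.

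However, ``merging weakly helps'' skips an equilibrium condition; the paper's necessity step has the same omission. If the merged posterior $\bar\mu$ has strictly more supporters, the launch probability rises from $p$ to $p'$. The principal's condition for forgoing after $s^-$, $\mu^-p'\le c$, can then fail even though $\mu^-p\le c$ held.

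A concrete instance: $v_1=\theta-0.3$, $v_2=0.7-\theta$, $n=k=2$, $\epsilon=1/2$ (so $p_1=3/16$, $p_2=9/16$), $c=0.01$, and a prior concentrated near $0.05,0.2,0.9$ with weights $0.6,0.2,0.2$.
\begin{itemize}
\item Separating the three clusters is credible.
\item After merging the top two clusters, the principal strictly prefers to propose after $s^-$, since $0.05\cdot 9/16>0.01$.
\item If voting were restricted to pure strategies, every single-proposal-signal test would earn strictly less than the separating test.
\end{itemize}

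The repair uses mixed voting. Let $\mu^-(r)$ denote the non-proposal posterior under cutoff $r$. Lower the cutoff, moving the top of the $s^-$ region into the proposal signal, until $\mu^-(r)\,p=c$. Let the agent whose threshold the proposal posterior crosses mix, so that $p$ can take intermediate values.
\begin{itemize}
\item The proposal posterior stays in $[\mu_1,\bar\mu]$. At $\mu_1$ the public launch probability is available and already satisfies the constraint, because $\mu^-(r)\le\mu^-$.
\item On $[\mu_1,\bar\mu]$, unimodality keeps $p$ at least the public launch probability.
\item At the stopping point every added state satisfies $\theta p>c$.
\end{itemize}
So the resulting threshold test is implementable under private messaging and earns at least the public payoff.
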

\begin{proof}
    See Appendix~\ref{apx:strict}. 
\end{proof}

\subsubsection{Proof Sketch of \Cref{thm:strict}}
\label{sec:thm:strict:proofsketch}

We begin with a definition. Let $N(\mu)$ denote the set of possible numbers of approvals from rational agents given the posterior mean $\mu$.\footnote{The formal definition of $N(\mu)$ shows up in Appendix~\ref{apx:publiclinear}.} Notice that $N(\mu)$ is a singleton if $\mu\neq \delta_i$ for any $i$; if $\mu = \delta_i$ for some $i$, then $N(\mu)$ includes two elements since only agent $i$ is indifferent between acceptance and rejection, given our assumption that all agents have different acceptance thresholds. 

\begin{definition}[Support Gap]
\label{def:supportgap}
An interval $(a, b) \subseteq (0,1)$ is called a support gap if there exists $\Tilde{n}\leq n$ such that (i) $\Tilde{n}\in N(a)$ and $\Tilde{n}\in N(b)$ and (ii) $\max N(x) < \Tilde{n}$ for any $x\in(a, b)$. In other words, while the posterior means, $a$ and $b$, can secure the support of $\Tilde{n}$ rational agents, any posterior mean in the interval $(a, b)$ cannot gain the same level of support. 
\end{definition}

The proof of \Cref{thm:strict} proceeds in two steps. First, we show that the existence of two conflicting allies is equivalent to the existence of a support gap as defined above. Given a support gap $(a, b)$, it is immediate that the two agents who are indifferent in $a$ or $b$ are the principal's conflicting allies. Given two conflicting allies, we can also identify a support gap by construction, whose details are included in the proof. 

Second, we show that the existence of a support gap is equivalent to the strict dominance of public communication under some $(F, c)$. On the one hand, the strict dominance of public communication indicates that an optimal public test needs two proposal signals, which further implies that the merger of these two signals will fall into a support gap, as exemplified by \Cref{subfig:optimal-public-3}. On the other hand, we establish the reverse implication by construction. Given a support gap $(a, b)$, we let $c$ be sufficiently small and the prior distribution $F$ place sufficient weight on $a$ and $b$. In this situation, the two posterior means, $a$ and $b$, represent two groups of agents of equal size, and the principal finds it optimal to secure the support of one and only one of these two groups (selected at random) by using a public test with two proposal signals. Importantly, the principal will not merge the two proposal signals, as doing so would cause her to fall into the support gap.

\subsubsection{Interpretation of \Cref{thm:strict}}

The intuition behind \Cref{thm:strict} is as follows. The presence of two conflicting allies indicates two groups of agents who would support the principal in different states. Roughly speaking, while these two groups may consist of some common members, there are some members in each group who disagree with the other. In such situations, the credibility afforded by public communication is valuable because by publicly siding with one of the two groups, which is selected at random, the principal can secure the trust of the side she chooses.

\Cref{thm:strict} provides a convenient criterion to determine whether public messaging can strictly dominate private messaging. Most strikingly, regardless of the number of agents, it suffices to check whether there exist two agents who are the principal's conflicting allies.

\Cref{thm:strict} is stated for a fixed trembling-hand error $\epsilon>0$. One potential concern is that the strict dominance of public messaging could vanish as $\epsilon\to 0$. \cref{prop:strict-eps-to-zero} addresses this concern by identifying the condition under which the benefit of public messaging persists in the limit.

\begin{proposition}[Strict Dominance When $\epsilon\to 0$]
\label{prop:strict-eps-to-zero}
In the linear environment, the following are equivalent.
\begin{enumerate}[label=(\alph*)]
\item There exists a prior distribution $F$ and a cost $c>0$ such that public messaging is strictly better than private messaging for sufficiently small $\epsilon$;
\item The following set is non-convex:
\begin{equation*}
\Lambda:=\{\theta\in[0,1]: \max N(\theta)\ge k\}~.
\end{equation*}
\end{enumerate}
\end{proposition}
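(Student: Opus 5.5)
The plan is to reduce the comparison, for small $\epsilon$, to the limiting launch probabilities. Take $\epsilon\to 0$. A proposal signal whose posterior mean is $\mu$ leads to launch with probability $p(\mu)\to 1$ when $\max N(\mu)\ge k$ (or when the relevant tie-breaking selection in $N(\mu)$ reaches $k$), and $p(\mu)\to 0$ otherwise. The limiting posterior payoff is therefore $V(\mu)\approx(\mu-c)\,\ind{\mu\in\Lambda}$ for $\mu>0$ in the relevant range. Under private messaging, \Cref{prop:privatelinear} says the optimum is a threshold test. Under public messaging, \Cref{prop:publiclinear} together with \Cref{lem:sameproposeprob} says the optimum is a bi-pooled threshold test in which every proposal signal induces the same launch probability. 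In the limit, that common probability is either $\approx 1$ (all proposal posteriors lie in $\Lambda$) or $\approx 0$ (worthless, since it only pays $-c$). The comparison thus becomes: can two posterior means in $\Lambda$ beat any single posterior mean in $\Lambda$? This is exactly a convexity question about $\Lambda$.

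For (b)$\Rightarrow$(a), suppose $\Lambda$ is non-convex. Then there exist $a<b$ in $\Lambda$ and a point in $(a,b)\setminus\Lambda$. Since $N$ only changes at the thresholds $\delta_i$ and $\Lambda\subseteq[0,1]$ is a finite union of intervals and points, I can take a maximal gap $(a',b')$ with $a',b'\in\Lambda$ and $(a',b')\cap\Lambda=\emptyset$, and I may shrink it so that $0<a'$.

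Next, mimic the construction from the proof sketch of \Cref{thm:strict}. Let $F$ put almost all its mass near $a'$ and $b'$: a small interval around each, with weights chosen so that the pooled mean lies in the gap. Add a small mass below $0$ to keep the assumptions, and take $c<a'$ small.

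The public test reports separately whether $\theta$ is near $a'$ or near $b'$. Both posteriors lie in $\Lambda$ (closedness at the endpoints may need care; I would use the closure and the $N(\cdot)$ tie convention, or place mass slightly inside $\Lambda$). Its limiting payoff is $\approx \Pr(\text{near }a')(a'-c)+\Pr(\text{near }b')(b'-c)$. Any threshold test either pools both clusters, so its posterior falls in the gap and the payoff is $\approx -c\cdot\Pr<0$, or it proposes only on the $b'$ cluster, so it forgoes $\Pr(\text{near }a')(a'-c)>0$. Hence there is a strict gap that stays bounded away from zero as $\epsilon\to0$. Since all payoffs are continuous in $\epsilon$ for fixed finite tests, the strict inequality holds for all sufficiently small $\epsilon$. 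The private optimum is over a compact family of thresholds, and uniform convergence in $\epsilon$ handles that side.

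For (a)$\Rightarrow$(b), argue by contrapositive and suppose $\Lambda$ is convex, i.e.\ an interval $[\ell,r]$ (possibly empty). Take the optimal public bi-pooled test for small $\epsilon$. Proposal signals whose common launch probability is $p_j$ with $j<k$ give a payoff of order $\epsilon$ minus $c$, which is negative, so they are not optimal for small $\epsilon$. Therefore the launch probability is the $\epsilon$-version of "$\ge k$ approvals", and all proposal posteriors lie in $\Lambda$. By convexity, merging the two proposal signals gives a posterior that is still in $\Lambda$ and keeps the same number of approvals. I need the merged posterior to induce the same launch probability at finite $\epsilon$, not just in the limit. This is the main obstacle, because $N$ can differ across $\Lambda$ even when $\max N\ge k$ throughout, so the exact $p_j$ can differ.

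To handle this, I would compare exact payoffs. Either I show that merging into one signal and then replacing it by the threshold test with the same mean, as in the proof of \Cref{prop:privatelinear}, weakly improves payoff up to $O(\epsilon)$ terms that can be absorbed, or I argue directly that the limiting optimal values coincide and that the private value is continuous. Since (a) requires strict dominance for all small $\epsilon$, I expect to need the uniform-in-$\epsilon$ version. I will attempt this by showing that the public optimum's proposal region, restricted to posteriors with $\max N\ge k$, can be matched by a threshold test whose posterior mean lies in $\Lambda$ and whose approval count is weakly higher. Convexity of $\Lambda$ combined with the monotone structure of the agents' acceptance sets should deliver that, and it would make the private value weakly exceed the public one for small $\epsilon$, completing the contrapositive.
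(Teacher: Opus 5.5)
Your $(b)\Rightarrow(a)$ half follows the paper's route: a bimodal prior concentrated at the two endpoints of a gap in $\Lambda$. Unlike the paper's proof of \Cref{thm:strict}, you also explicitly handle the private threshold lying between the two clusters.

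The genuine gap is in $(a)\Rightarrow(b)$. You read (a) as $U_{\mathsf{pub}}(\epsilon)>U_{\mathsf{pri}}(\epsilon)$ for all small $\epsilon$, and you try to show that a convex $\Lambda$ forces $U_{\mathsf{pri}}\ge U_{\mathsf{pub}}$ for small $\epsilon$. Under that reading the implication is false, so neither of your proposed fixes can work. Take $n=3$, $k=1$, $v_1=-\theta+0.3$, $v_2=\theta-0.6$, $v_3=\theta+0.5$. Agent 3 approves on all of $[0,1]$, so $\Lambda=[0,1]$ is convex. Yet agents 1 and 2 are conflicting allies, and $(0.3,0.6)$ is a support gap at level $\tilde n=2>k$. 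Use the prior from the proof of \Cref{thm:strict} (half the mass uniform on a small interval around $0.3$, half around $0.6$) and, say, $c=0.05$. The public test separating the clusters earns $0.45\,p_2-c$. The best private test pools everything and earns $0.45\,p_1-c$; serving only the upper cluster gives about $0.3\,p_2-c/2$. Here $p_j=1-(\epsilon/2)^j(1-\epsilon/2)^{3-j}$, so $p_2-p_1=\tfrac{\epsilon}{2}(1-\tfrac{\epsilon}{2})(1-\epsilon)>0$. Public messaging is therefore strictly better for every small $\epsilon$, by a margin of order $\epsilon$. That is exactly the $O(\epsilon)$ term you hoped to absorb. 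Your matching claim fails here too: any threshold test that keeps the lower cluster pools into $(0.3,0.6)$, where only agent 3 approves.

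The missing idea is that (a) must be read as the advantage surviving the limit, $\liminf_{\epsilon\to0}(U_{\mathsf{pub}}-U_{\mathsf{pri}})>0$. This is how the paper frames the result (the benefit ``persists in the limit''), and its proof only uses that launch probabilities tend to $0$ or $1$. Under that reading, the route you discarded is the whole converse:
\begin{itemize}
\item If the public optimum's proposal signals share a launch probability $p_j$ with $n_j<k$, its value is at most $p_j(\epsilon)\to0\le U_{\mathsf{pri}}$.
\item Otherwise all its proposal posteriors lie in $\Lambda$. Pooling them gives a binary test whose mean lies in $\Lambda$ and whose launch probability is at least $1-\eta(\epsilon)$, where $\eta(\epsilon):=\max_{j:n_j\ge k}(1-p_j(\epsilon))\to0$. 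Hence $U_{\mathsf{pub}}\le U_{\mathsf{pri}}+\eta(\epsilon)$. If the principal would not propose after the pooled signal, then already $U_{\mathsf{pub}}\le\eta(\epsilon)$.
\end{itemize}
Together with \Cref{thm:public-implements-more}, the limiting difference is zero.

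Two smaller points on $(b)\Rightarrow(a)$. First, at each $\epsilon>0$ your two-signal public test is an equilibrium only if both cluster posteriors carry the same number of supporters (\Cref{lem:sameproposeprob}). This holds, but you should check it: each endpoint of a maximal gap of $\Lambda$ is a single threshold at which the count is exactly $k$ and falls to $k-1$ inside the gap. Second, your step ``shrink it so that $0<a'$'' is unjustified. If some agent has $\alpha_i=-1$ and $\delta_i=0$, then $0$ can be an isolated point of $\Lambda$. For example, $n=2$, $k=1$, $v_1=-\theta$, $v_2=\theta-0.5$ give $\Lambda=\{0\}\cup[0.5,1]$. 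Then (b) holds while (a) fails, since a posterior at $0$ is worthless to the principal. The statement needs $\delta_i\neq0$, or $\Lambda$ defined on $(0,1]$.
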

\begin{proof}
    See Appendix~\ref{apx:strict-eps-to-zero}.
\end{proof}

In this proposition, the set $\Lambda$ includes all the states under which the principal and at least $k$ agents are willing to launch the project. \Cref{fig:non-convex-launch-set} illustrates an example where this set is non-convex. Notably, if this set is non-convex, then there must exist two agents who are the principal's conflicting allies; the converse, however, is not necessarily true.
Thus, we provide a stronger condition for the strict dominance of public messaging in \Cref{prop:strict-eps-to-zero}.

Intuitively, as $\epsilon\to 0$, the only first-order determinant of implementation is whether the number of supporters is at least $k$. Conditional on being safely above (or below) the threshold, the distinction between having, say, $k+1$ supporters and $k+10$ supporters affects outcomes only through trembling events of vanishing probability, so it has a negligible effect on payoffs in the limit. Therefore, the mere existence of certain conflicting allies can be washed out by aggregation. For a strict advantage of public over private messaging to persist as $\epsilon\to 0$, the agents' preferences must exhibit a genuine ``support gap'' (see interval $(\delta_i,\delta_j)$ in \Cref{fig:non-convex-launch-set}), i.e., there exists an intermediate rejection region (where fewer than $k$ agents support launching) between two disjoint regions in which (possibly different) coalitions of size at least $k$ support launching. This is exactly what the non-convexity of $\Lambda$ captures.

% =========================================================
% 图 2：非凸启动集 (Non-convex Launch Set)
% =========================================================
\begin{figure}[t]
    \centering
    \begin{tikzpicture}[scale=1.2]
        
        % --- 定义视觉坐标 ---
        \coordinate (start) at (-1, 0);
        \coordinate (zero) at (0, 0);
        \coordinate (delta_i_new) at (1.0, 0);  % 0.25 * 4 = 1.0
        \coordinate (delta_j_new) at (2.4, 0);  % 0.60 * 4 = 2.4
        \coordinate (delta_l_new) at (3.4, 0);  % 0.85 * 4 = 3.4
        
        % --- 绘制红色弧形区间 ---
        
        % 区间 1: [0, delta_i]
        \draw[red, thick, pattern=north east lines, pattern color=red] 
            (zero) to[out=60, in=120, looseness=1.4] 
            node[midway, above=2pt, text=red, font=\small] {$\ge k$ agents} 
            (delta_i_new) -- cycle;

        % 区间 2: [delta_j, delta_l]
        \draw[red, thick, pattern=north east lines, pattern color=red] 
            (delta_j_new) to[out=60, in=120, looseness=1.4] 
            node[midway, above=2pt, text=red, font=\small] {$\ge k$ agents} 
            (delta_l_new) -- cycle;

        % --- 绘制 Consensus Gap 花括号 ---
        \draw [decorate, decoration={brace, amplitude=5pt}, thick]
            (1.0, 0.15) -- (2.4, 0.15) 
            node [midway, above=3pt, align=center, font=\small, black] {gap};

        % --- 绘制坐标轴 ---
        \draw[thick, decorate, decoration={zigzag, segment length=5pt, amplitude=2pt}] 
            (-1, 0) -- (-0.4, 0);
        \draw[thick, ->, >=Stealth] (-0.4, 0) -- (4.2, 0) node[right] {$\mu$};
        
        % --- 统一刻度 ---
        \foreach \x/\lbl in {-1/-1, 0/0, 4/1} {
            \draw[thick] (\x, 0.1) -- (\x, -0.1);
            \node[anchor=north, inner sep=0pt, yshift=-8pt, font=\small] at (\x, 0) {$\lbl$};
        }
        
        % --- 绘制阈值点 ---
        \foreach \x/\lbl in {1.0/\delta_i, 2.4/\delta_j, 3.4/\delta_l} {
            \fill[black] (\x, 0) circle (2pt);
            \node[anchor=north, inner sep=0pt, yshift=-8pt, font=\small] at (\x, 0) {$\lbl$};
        }
    \end{tikzpicture}
    \caption{An example where the set $\Lambda$ is non-convex. In particular, both the principal and at least $k$ agents support the project if and only if $\theta \in \Lambda=[0,\delta_i]\cup[\delta_j,\delta_l]$.}
    \label{fig:non-convex-launch-set}
\end{figure}

\subsubsection{A Numerical Example of Strict Dominance}

We conclude \Cref{sec:linear} with a numerical example where public messaging is strictly dominant. 

\begin{example}
\label{exp:public-strictly-better}
Consider the following instance:
(a) There are $n=2$ agents. \\
(b) A proposed project is launched as long as $k=1$ agent approves. \\
(c) The state follows a uniform prior distribution, i.e., $F=U[-1,1]$. \\
(d) Two agents' payoff functions are $v_1(\theta)=\theta-0.6$ and $v_2(\theta)=-\theta+0.4$, respectively. \\%These two agents are conflicting allies to the principal.
(e) The proposal cost is negligible, i.e., $c\to 0$. %The trembling-hand error and noise $\epsilon >0$.
\end{example}

By \Cref{prop:publiclinear}, an optimal public test is $t^*_{\mathsf{pub}}=\{\signalSpace^*_{\mathsf{pub}},\pi^*_{\mathsf{pub}}\}$ with $\signalSpace^*_{\mathsf{pub}}=\{s^-,s^+_1,s^+_2\}$ and
\begin{align*}
    &\pi^*_{\mathsf{pub}}\left(s^-\mid \theta\right)=1~,\quad \forall \theta\in[-1,0]~,\\
    &\pi^*_{\mathsf{pub}}\left(s^+_1\mid \theta\right)=1~,\quad \forall \theta\in[0.15,0.65]~,\\
    &\pi^*_{\mathsf{pub}}\left(s^+_2\mid \theta\right)=1~,\quad \forall \theta\in[0,0.15]\cup[0.65,1]~.
\end{align*}
This test brings the principal an expected payoff of
\begin{align*}
    U_{\mathsf{pub}}^*
    &=0.25\cdot 0.4\cdot \left( 1 - \frac{\epsilon}{2} + \frac{\epsilon^2}{4} \right)+0.25\cdot 0.6\cdot \left( 1 - \frac{\epsilon}{2} + \frac{\epsilon^2}{4} \right)\\
    &=0.25\cdot \left( 1 - \frac{\epsilon}{2} + \frac{\epsilon^2}{4} \right)~.
\end{align*}
By \Cref{prop:privatelinear}, an optimal private test is $t^*_{\mathsf{pri}}=\{\signalSpace^*_{\mathsf{pri}},\pi^*_{\mathsf{pri}}\}$ with $\signalSpace^*_{\mathsf{pri}}=\{s^-,s^+\}$ and
\begin{align*}
    &\pi^*_{\mathsf{pri}}\left(s^-\mid \theta\right)=1~,\quad \forall \theta\in[-1,0.2]~,\\
    &\pi^*_{\mathsf{pri}}\left(s^+\mid \theta\right)=1~,\quad \forall \theta\in[0.2,1]~.
\end{align*}
This test brings the principal an expected payoff of
\begin{align*}
    U_{\mathsf{pri}}^*
    =0.4\cdot 0.6\cdot \left( 1 - \frac{\epsilon}{2} + \frac{\epsilon^2}{4} \right)=0.24\cdot \left( 1 - \frac{\epsilon}{2} + \frac{\epsilon^2}{4} \right)<U_{\mathsf{pub}}^*~.
\end{align*}

\section{Summary}
\label{sec:Summary}

This paper studies learning and communication in collective decision problems. A chairperson (the principal) can acquire information about an uncertain project, decide whether to propose it, and then communicate via cheap talk with a committee (a group of agents) that votes under a $k$-majority rule. The paper asks whether, from the principal's perspective, communication should be public or private. The central trade-off is that private communication provides \textit{flexibility} for targeted persuasion, whereas public communication sustains \textit{credibility} by generating common knowledge within the committee. 

The paper delivers two main economic insights. First, the flexibility afforded by private communication is valueless, making public communication weakly dominant. Intuitively, upon proposing the project, the principal has a transparent incentive to maximize each agent's approval probability. As a result, under private communication, her messages convey no additional payoff-relevant information to the agents beyond the proposal itself. 

Second, we identify necessary and sufficient conditions under which the credibility afforded by public communication is valuable, rendering public communication strictly dominant. This occurs when there exist two agents who are the principal's \textit{conflicting allies}: each shares some common interests with the principal, yet they are in conflict with each other. Intuitively, in such situations, public communication allows the principal to publicly side with one of the two conflicting allies, which is selected at random, thereby securing the trust of the side she chooses.

\newpage
\bibliographystyle{apalike}
\bibliography{reference.bib}

@article{li2024information,
  title={Learning and Communication Towards Unanimous Consent},
  author={Li, Yingkai and Xu, Boli},
  journal={arXiv preprint arXiv:2405.18521},
  year={2026}
}

@article{goltsman2011talk,
  title={How to talk to multiple audiences},
  author={Goltsman, Maria and Pavlov, Gregory},
  journal={Games and Economic Behavior},
  volume={72},
  number={1},
  pages={100--122},
  year={2011},
  publisher={Elsevier}
}

@article{arieli2023optimal,
  title={Optimal persuasion via bi-pooling},
  author={Arieli, Itai and Babichenko, Yakov and Smorodinsky, Rann and Yamashita, Takuro},
  journal={Theoretical Economics},
  volume={18},
  number={1},
  pages={15--36},
  year={2023},
  publisher={Wiley Online Library}
}

@article{schnakenberg2017informational,
  title={Informational lobbying and legislative voting},
  author={Schnakenberg, Keith E},
  journal={American Journal of Political Science},
  volume={61},
  number={1},
  pages={129--145},
  year={2017},
  publisher={Wiley Online Library}
}

@article{farrell1989cheap,
  title={Cheap talk with two audiences},
  author={Farrell, Joseph and Gibbons, Robert},
  journal={The American Economic Review},
  volume={79},
  number={5},
  pages={1214--1223},
  year={1989},
  publisher={JSTOR}
}

@article{salcedo2019persuading,
  title={Persuading part of an audience},
  author={Salcedo, Bruno},
  journal={arXiv preprint arXiv:1903.00129},
  year={2019}
}

@article{schnakenberg2015expert,
  title={Expert advice to a voting body},
  author={Schnakenberg, Keith E},
  journal={Journal of Economic Theory},
  volume={160},
  pages={102--113},
  year={2015},
  publisher={Elsevier}
}

@article{bardhi2018modes,
  title={Modes of persuasion toward unanimous consent},
  author={Bardhi, Arjada and Guo, Yingni},
  journal={Theoretical Economics},
  volume={13},
  number={3},
  pages={1111--1149},
  year={2018},
  publisher={Wiley Online Library}
}

@article{nguyen2021bayesian,
  title={Bayesian persuasion with costly messages},
  author={Nguyen, Anh and Tan, Teck Yong},
  journal={Journal of Economic Theory},
  volume={193},
  pages={105212},
  year={2021},
  publisher={Elsevier}
}

@article{lipnowski2022persuasion,
  title={Persuasion via weak institutions},
  author={Lipnowski, Elliot and Ravid, Doron and Shishkin, Denis},
  journal={Journal of Political Economy},
  volume={130},
  number={10},
  pages={2705--2730},
  year={2022},
  publisher={The University of Chicago Press Chicago, IL}
}

@article{dworczak2019simple,
  title={The simple economics of optimal persuasion},
  author={Dworczak, Piotr and Martini, Giorgio},
  journal={Journal of Political Economy},
  volume={127},
  number={5},
  pages={1993--2048},
  year={2019},
  publisher={The University of Chicago Press Chicago, IL}
}

@article{ali2025political,
  title={The Political Economy of Zero-Sum Thinking},
  author={Ali, S Nageeb and Mihm, Maximilian and Siga, Lucas},
  journal={Econometrica},
  volume={93},
  number={1},
  pages={41--70},
  year={2025},
  publisher={Wiley Online Library}
}

@article{kreutzkamp2024persuasion,
  title={Persuasion without ex-post commitment},
  author={Kreutzkamp, Sophie and Lou, Yichuan},
  journal={Journal of Economic Theory},
  pages={106058},
  volume = {228},
  issn = {0022-0531},
  doi = {https://doi.org/10.1016/j.jet.2025.106058},
  year={2025},
  publisher={Elsevier}
}

@article{alonso2016persuading,
  title={Persuading voters},
  author={Alonso, Ricardo and C{\^a}mara, Odilon},
  journal={American Economic Review},
  volume={106},
  number={11},
  pages={3590--3605},
  year={2016},
  publisher={American Economic Association}
}

@article{chan2019pivotal,
  title={Pivotal persuasion},
  author={Chan, Jimmy and Gupta, Seher and Li, Fei and Wang, Yun},
  journal={Journal of Economic theory},
  volume={180},
  pages={178--202},
  year={2019},
  publisher={Elsevier}
}

@article{lin2024credible,
  title={Credible persuasion},
  author={Lin, Xiao and Liu, Ce},
  journal={Journal of Political Economy},
  volume={132},
  number={7},
  pages={2228--2273},
  year={2024},
  publisher={The University of Chicago Press Chicago, IL}
}

@article{lyu2022information,
  title={Information design in cheap talk},
  author={Lyu, Qianjun and Suen, Wing},
  journal={arXiv preprint arXiv:2207.04929},
  year={2024}
}

@article{xu2023,
  title={Optimal Disclosure in Two-Sided Matching},
  author={Xu, Boli},
  journal={Available at SSRN 4003491},
  year={2023}
}

@article{guo2021costly,
  title={Costly miscalibration},
  author={Guo, Yingni and Shmaya, Eran},
  journal={Theoretical Economics},
  volume={16},
  number={2},
  pages={477--506},
  year={2021},
  publisher={Wiley Online Library}
}

@article{lipnowski2020cheap,
  title={Cheap talk with transparent motives},
  author={Lipnowski, Elliot and Ravid, Doron},
  journal={Econometrica},
  volume={88},
  number={4},
  pages={1631--1660},
  year={2020},
  publisher={Wiley Online Library}
}

\newpage
\appendix

\section{Proofs}

\subsection{Proof of \Cref{prop:pbe_characterization}}
\label{pf:pbe_characterization}

Fix a test $t=(\signalSpace,\pi)$. Upon observing a signal $s\in\signalSpace$, the principal forms her posterior belief $\psi(s)\in\Delta(\Theta)$ via Bayes’ rule, which corresponds to an expected payoff of launching the project
\[
\mathbb{E}[u(\theta)\mid s]:=\mathbb{E}_{\theta\sim \psi(s)}\!\big[u(\theta)\big]~.
\]

Suppose the message $m_i$ (respectively, $m_i'$) is sent to agent $i$ with positive probability when the realized signal is $s$ (respectively, $s'$), where $s = s'$ is allowed. 
The fact that she is willing to propose the project under $s$ and $s'$ implies that $\mathbb{E}[u(\theta)\mid s] \geq c >0$ and $\mathbb{E}[u(\theta)\mid s']\geq c>0$.

Let $\boldsymbol{m} \in \text{supp}(\sigma(s))$ (respectively, $\boldsymbol{m}' \in \text{supp}(\sigma(s'))$) be the message profile that specifies the message $m_i$ (respectively, $m_i'$) to agent $i$. 
Let $Y_{-i}$ denote the number of agents other than $i$ who vote yes.
Given the agents' voting strategy $\boldsymbol{\tau}$, for any realized message profile $\boldsymbol{m}$, the equilibrium launch probability conditional on proposal is
\begin{align*}
    \Pr(L=1\mid \boldsymbol{m},P=1)
    =&\Pr(Y_{-i}\ge k\mid \boldsymbol{m}_{-i})\\
    &+\Pr(Y_{-i}= k-1\mid \boldsymbol{m}_{-i})\cdot \hat{\tau}_i(m_i)~,
\end{align*}
where
\begin{equation*}
    \hat{\tau}_i(m_i)=(1-\epsilon)\tau_i(m_i)+\epsilon/2~.
\end{equation*}
Due to the trembling-hand error $\epsilon$, we have
\begin{equation*}
    \Pr(Y_{-i}= k-1\mid \boldsymbol{m}_{-i})>0~.
\end{equation*}

Suppose, by contradiction, $\tau_i(m_i) > \tau_i(m_i')$, then the principal will strictly benefit from deviating from $\boldsymbol{m}'$ to $(m_i, \boldsymbol{m}'_{-i})$ when the realized signal is $s'$, as it strictly increases the probability that the project is launched. Formally, 
\[
\mathbb{E}[u(\theta)\mid s'] \cdot \Pr(L=1\mid (m_i,\boldsymbol{m}'_{-i}), P=1) > \mathbb{E}[u(\theta)\mid s'] \cdot \Pr(L=1\mid \boldsymbol{m}', P=1)~.
\]

\subsection{Proof of \Cref{prop:private-babbling}}
\label{pf:private-babbling}

Fix a test $t=(\signalSpace,\pi)$.
Under private messaging, suppose an outcome $\outcome\in\Omega_{\mathsf{pri}}(t)$ is induced by an equilibrium $\rho$. In this equilibrium, we denote the set of all signals with strictly positive proposal probability by $\signalSpace^{+}_{\rho}\subseteq\signalSpace$. Also, we denote the set of messages that are sent to agent $i$ on the equilibrium path by $\messageSpace_i^\rho\subseteq\messageSpace$. 

\medskip
\noindent
\underline{\emph{Babbling Equilibrium Construction.}}

We construct a new messaging strategy $\sigma'$ by merging all messages in $\messageSpace_i^{\rho}$ into a single message $m^*_i$. 
This message is sent to agent $i$ regardless of the realized signal,
while the messages for other agents remain unchanged; that is,
for any on-path message profile $\boldsymbol{m}$,
\begin{equation*}
    \sigma'((m_i^*,\boldsymbol{m}_{-i})\mid s):=\sigma(\boldsymbol{m}\mid s)~,\quad \forall s\in\signalSpace^{+}_\rho~.
\end{equation*}
By \Cref{prop:pbe_characterization}, agent $i$'s voting strategy is constant across all his on-path messages from $\messageSpace_i^\rho$. Based on this, we construct a new voting strategy and belief for agent $i$. Specifically, define
\begin{equation*}
    \tau_i'(m_i^*):=\tau_i(m_i)~,\quad \forall m_i\in\messageSpace_i^\rho~.
\end{equation*}
Moreover, let agent $i$'s posterior belief after receiving $m_i^*$, which we denote by $\psi_i'(m_i^*)$, be the compound posterior belief from receiving a message from $\messageSpace_i^\rho$ in the original equilibrium $\rho$.

Hence, we construct a new assessment $\rho'$, consisting of the original proposal strategy $\gamma$, the new messaging strategy $\sigma'$, the new voting strategy profile $(\tau_i',\boldsymbol{\tau}_{-i})$, and the new belief system $(\psi_i',\boldsymbol{\psi}_{-i})$.  

\medskip
\noindent
\underline{\emph{Equilibrium and Outcome Implementation Verification.}}

We first consider the voting game induced by $t$, $\gamma$, and $\sigma'$.
Under the modified voting strategy profile, the messages received and strategies of all other agents remain unchanged. 
Hence, agent $i$ faces the same best-response problem as in the original equilibrium $\rho$. Given that he chooses the same voting probability under the belief $\psi_i(m_i)$ across different $m_i\in \messageSpace_i^{\rho}$, it remains optimal for him to do so under the belief $\psi_i'(m_i^*)$. Hence, $(\tau_i', \boldsymbol{\tau}_{-i})$ remains an equilibrium for the continuation game following $t$, $\gamma$, and $\sigma'$.

Given the agents' new voting strategy profile $(\tau_i',\boldsymbol{\tau}_{-i})$, by \Cref{prop:pbe_characterization}, for any two on-path message profiles $\boldsymbol{m}$ and $\boldsymbol{m}'$ in the original equilibrium, we have
\begin{align*}
    \Pr\!\left(L=1\mid (m_i^*,\boldsymbol{m}_{-i}),(\tau_i',\boldsymbol{\tau}_{-i}),P=1\right)
    &= \Pr\!\left(L=1\mid \boldsymbol{m},\boldsymbol{\tau},P=1\right) \\
    &= \Pr\!\left(L=1\mid \boldsymbol{m}',\boldsymbol{\tau},P=1\right) \\
    &= \Pr\!\left(L=1\mid (m_i^*,\boldsymbol{m}'_{-i}),(\tau_i',\boldsymbol{\tau}_{-i}),P=1\right) ~.
\end{align*}
Hence, holding fixed the proposal strategy $\gamma$, the principal's continuation payoff is the same across all on-path message profiles induced by $\sigma'$. It follows that the principal has no profitable deviation from $\sigma'$.
Therefore, we show that $\rho'$ is also an equilibrium given the test $t$. Moreover, we can infer that the constructed equilibrium $\rho'$ implements the same outcome $\outcome$, because it preserves the joint distribution over the state, the proposal decision, and the approval decision.

\vspace{10pt}

By repeating this merging procedure for each agent in an arbitrary order, while preserving the equilibrium and the induced outcome $\outcome$ at every step, we arrive at a strategy where, for any proposal signal from $\signalSpace^{+}_\rho$, the principal proposes the project with strictly positive probability and always sends the same message profile $\boldsymbol{m}^*:= (m_1^*, \dots, m_n^*)$ to all agents. That is, the principal uses a babbling messaging strategy in the new equilibrium, which implements the same outcome $\outcome$.

\subsection{Proof of \Cref{thm:public-implements-more}}
\label{pf:public-implements-more}
Fix a test $t\in \mathcal{T}$ and an outcome $\outcome\in\Omega_{\mathsf{pri}}(t)$ under private messaging. By \cref{prop:private-babbling}, there must be a babbling equilibrium $\rho$ that implements $\outcome$. In this equilibrium, the message space is a singleton that contains the only message $\boldsymbol{m}^*= (m_1^*, \dots, m_n^*)$.

We now construct an assessment $\rho'$ for the game under public messaging. We construct a public message $m_0^*$, keep the principal's proposing strategy unchanged, adjust her messaging strategy by replacing each message $m_i^*$ with $m_0^*$, and let each agent $i$'s belief and voting strategy upon seeing $m_0^*$ be the same as $m_i^*$. This assessment remains to be an equilibrium under public messaging. First, each agent's belief formation and sequential rationality remain unchanged. Second, given that the agents essentially ignore the message sent by the principal beyond the fact that a proposal is made, the principal does not benefit from switching to another message.

\subsection{Proof of \Cref{lem:public_revelation}}
\label{apx:public-truthful}
Consider any test $t=(\signalSpace,\pi)$ and any equilibrium $\rho=(\gamma,\sigma,\tau,\psi)$ under public messaging that implements some outcome $\omega$. We construct another test $t'=(\signalSpace',\pi')$ that admits a truthful equilibrium and implements the same outcome.

\medskip
\noindent
\underline{\emph{Credible Test Construction.}}

First, we construct the new signal space $\signalSpace':=\signalSpace\times \messageSpace$.
For each state $\theta\in\Theta$, define the mapping from state to signal 
\[
\pi'((s,m)\mid \theta):=\pi(s\mid \theta)\sigma(m\mid s)~,
\quad \forall (s,m)\in \signalSpace'~.
\]
Thus, the realized signal records both the original signal $s$ and the public message $m$ that the principal would send in the original equilibrium after observing $s$.

Now consider the following strategy profile under $t'$. After observing $(s,m)\in \signalSpace'$, the principal proposes the project with probability $\gamma(s)$. If she proposes, she truthfully reports $(s,m)$ as the public message. Each agent, upon receiving $(s,m)$, takes the same action as in the original equilibrium after receiving the public message $m$. Formally, for each agent $i$, define
\[
\tau_i'((s,m)):=\tau_i(m)~, \quad \forall (s,m)\in \signalSpace'~.
\]
Let $\psi'$ be the belief system induced by Bayes' rule under $t'$.
Next, we aim to show that the above assessment also forms an equilibrium and preserves the outcome $\outcome$.

\medskip
\noindent
\underline{\emph{Equilibrium and Outcome Implementation Verification.}}

By construction, for every state $\theta$, the joint distribution of $(s,m)$ under $t'$ coincides with the joint distribution of the original signal and the equilibrium public message under $(t,\sigma)$. Hence, after every on-path realization $(s,m)$, the agents' posterior beliefs under $\psi'$ coincide with their posterior beliefs in the original equilibrium after observing the message $m$. Therefore, each agent faces the same continuation problem as in the original equilibrium, so $\tau'$ is sequentially rational.

The principal also faces the same continuation payoff as in the original equilibrium. Indeed, after observing $(s,m)$ under $t'$, choosing proposal probability $\gamma(s)$ and truthfully reporting $(s,m)$ induces exactly the same continuation distribution as observing $s$ and sending $m$ under the original equilibrium. Hence, the principal's strategy is sequentially rational as well.

Therefore, together with $\psi'$, the constructed strategy profile forms a truthful equilibrium under $t'$. Since the joint distribution over the state, the proposal decision, and the launch decision is unchanged, the induced outcome remains $\omega$. Thus, $\omega$ is implementable by a credible test with a truthful equilibrium.

\subsection{Proof of \Cref{lem:sameproposeprob}}
\label{apx:sameproposeprob}

Consider any test $t$ and equilibrium $\rho$ under public messaging.
Suppose, by contradiction, there exists $s, s'\in \signalSpace_{\rho}^{+}$ such that
$$
\Pr(L=1\mid s, P=1) > \Pr(L=1\mid s', P=1)~.
$$
The fact that $s, s'\in \signalSpace_{\rho}^{+}$ implies that $\mathbb{E}[u(\theta)\mid s] \cdot \Pr(L=1\mid s, P=1) -c \geq 0$ and $\mathbb{E}[u(\theta)\mid s'] \cdot \Pr(L=1\mid s', P=1)-c \geq 0$, and thus $\mathbb{E}[u(\theta)\mid s]>0$ and $\mathbb{E}[u(\theta)\mid s']>0$. Hence, if the principal receives the signal $s'$ but mimics the continuation behavior as if she had received signal $s$, her payoff upon a proposal becomes
\begin{equation*}
    \mathbb{E}[u(\theta)\mid s'] \cdot \Pr(L=1\mid s, P=1) -c>\mathbb{E}[u(\theta)\mid s'] \cdot \Pr(L=1\mid s', P=1) -c~.
\end{equation*}

\subsection{Proof of \Cref{prop:publiclinear}}
\label{apx:publiclinear}

\Cref{lem:public_revelation} implies that it is without loss to consider credible tests with the associated truthful equilibrium. In such an equilibrium, once the project is proposed, all players form the same posterior belief. In the linear setting, only the posterior mean is payoff-relevant. It follows that designing an optimal test is equivalent to selecting an optimal mean-preserving contraction (MPC) of the prior. 

However, compared to a standard linear persuasion problem, the designer in our setting is subject to an additional credibility constraint---all signals that induce the principal's proposal must yield the same probability of project launch (\Cref{lem:sameproposeprob}). In other words, all the proposal signals must receive the same number of approvals from rational agents. To address this additional constraint, we first \textit{decompose} the original problem into a family of subproblems, each of which corresponds to a fixed number of approvals from rational agents. After solving all subproblems, we \textit{consolidate} them by comparing all candidate solutions and selecting the best.

\medskip
\noindent \underline{\textbf{Step 1: Decomposition.}}

\smallskip
\noindent \underline{\emph{Step 1.1: Formulate the subproblems.}}

We first determine the number of subproblems. In the linear setting, agent $i$'s approval region is $\mathcal{A}_i=[-1,\delta_i]$ if $\alpha_i=-1$, and $\mathcal{A}_i=[\delta_i,1]$ if $\alpha_i=1$. For any posterior mean $\mu$, let $N(\mu)$ be the set of possible numbers of approvals from rational agents. Formally, 
\begin{equation*}
    N(\mu):=
    \begin{cases}
        \left\{\sum_{i\in[n]}\ind{\mu\in \mathcal{A}_i}\right\}
        & \text{if }\mu\neq \delta_i \text{ for all } i\in[n]~,\\[6pt]
        \left\{\sum_{i\in[n]}\ind{\mu\in \mathcal{A}_i},\ \sum_{i\in[n]}\ind{\mu\in \mathcal{A}_i}-1\right\}
        & \text{otherwise}~.
    \end{cases}
\end{equation*}
Notice that $N(\mu)$ is a singleton when $\mu\neq \delta_i, \forall i\in[n]$, as no agent is indifferent between acceptance and rejection; we also know that $N(\mu)$ contains two elements when $\mu = \delta_i$ for some $i$, as agent $i$ is the only indifferent agent in this situation. 

Denote the number of distinct approval levels that arise in this instance by
\begin{equation*}
    q:=\left|\bigcup_{\mu\in[-1,1]} N(\mu)\right|\le n+1~.
\end{equation*}
Accordingly, the original problem can be decomposed into $q$ subproblems. In each subproblem $j\in[q]$, we let $n_j$ denote the number of approvals from rational agents and $p_j$ denote the corresponding probability of project launch. Formally,
\[p_j := \sum_{l=k}^{n} \sum_{i=0}^{l} \binom{n_j}{i} \binom{n-n_j}{l-i} \left( 1 - \frac{\epsilon}{2} \right)^{n-n_j-l+2i} \left( \frac{\epsilon}{2} \right)^{n_j+l-2i}~.\]
Without loss of generality, we rank the subproblems so that $n_1 < n_2 < \cdots < n_q$; accordingly,  $p_1 < p_2 < \cdots < p_q$. For each $j\in[q]$, we define the \textit{admissible support} (set of posterior means that induce $n_j$ approvals from rational agents) as
\begin{equation*}
    \mathcal{Q}_j:=\left\{\mu\in[-1,1]\mid n_j\in N(\mu)\right\}~.
\end{equation*}
Then, we can formulate the subproblem $j$ as follows:
\begin{align*}
    \max_{G\in\MPC(F)} &\int_{-1}^1 \max\{0,p_j \cdot \mu-c\} \dd G(\mu)\\
   \mathrm{s.t.} \quad &\supp(G)\subseteq [-1,c/p_j]\cup \mathcal{Q}_j 
\end{align*}
Note that, because of Bayesian plausibility, not every subproblem admits a feasible solution; that is, there may not exist an MPC of $F$ that satisfies the support constraint for some subproblems. In Step~2.1 below, we will show that, among the $q$ subproblems, at least one subproblem does. We therefore ignore the infeasible subproblems, as they will not generate the solution to the original problem.

Hereafter, we consider only the feasible subproblems. Suppose subproblem $j$ is feasible. 
To address the support constraint, we introduce a punishment for the non-admissible support, leading to the following problem:  
\begin{align}
    \max_{G\in\MPC(F)} &\int_{-1}^1 V_j(\mu) \dd G(\mu)\tag{$\mathcal{P}_j$}\label{eq:objective-ray-i}
\end{align}
where the principal's posterior payoff function is 
\begin{equation*}
    V_j(\mu):=
    \begin{cases}
        0 & \text{if } \mu\in[-1,c/p_j]~,\\
        p_j\cdot\mu-c & \text{if } \mu\in(c/p_j,1]\cap \mathcal{Q}_j~,\\
        -\infty & \text{if } \mu\in(c/p_j,1]\setminus \mathcal{Q}_j~.
    \end{cases}
\end{equation*}
When $\mu \leq c/p_j$, the principal optimally forgoes the project. When $\mu\in(c/p_j,1]\cap \mathcal{Q}_j$, she proposes the project, which will be launched with probability $p_j$. When $\mu\in(c/p_j,1]\setminus \mathcal{Q}_j$, we assign her an infinite punishment, ensuring that any solution with a finite objective places no mass on such a posterior mean in Problem~\ref{eq:objective-ray-i}. Apparently, subproblem $j$ admits an optimal solution $G^*$ if and only if $G^*$ is optimal to Problem~\ref{eq:objective-ray-i} with a non-negative optimal value.

\medskip
\noindent \underline{\emph{Step 1.2: Transform each feasible subproblem.}}

Problem~\ref{eq:objective-ray-i} is a linear persuasion problem with a continuous state space. The standard methodology for this type of problem is developed in \citet{dworczak2019simple}, which we copy below for convenience. In particular, \Cref{lem:dm-theorem-1} is useful for characterizing an optimal solution, and \Cref{lem:dm-theorem-2} is useful for showing the existence of an optimal solution.

\begin{lemma}[Theorem 1 of \citet{dworczak2019simple}]
\label{lem:dm-theorem-1}
    For any atomless prior $F\in\Delta([0,1])$ and interim utility function $u:[0,1]\to\mathbb{R}$, if there exists a CDF~$G\in\MPC(F)$ and a convex function $\phi:[0,1]\to \mathbb{R}$ such that
    \begin{itemize}
        \item $\phi(x)\ge u(x)$ for all $x\in[0,1]$;
        \item $\supp(G)\subseteq \{x\in[0,1]\mid u(x)=\phi(x)\}$; and
        \item $\int_0^1 \phi(x)\dd G(x)=\int_0^1\phi(x)\dd F(x)$~,
    \end{itemize}
    then $G$ is an optimal solution to $u$ in the space $\MPC(F)$.
\end{lemma}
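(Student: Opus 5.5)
This is the Dworczak--Martini verification theorem: a sufficient condition for optimality, proved by weak duality. The plan is to bound the value of every feasible $G\in\MPC(F)$ by the same quantity, $\int\phi\dd F$, and then show that the candidate $G$ attains it.

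First, fix any $H\in\MPC(F)$. Since $\phi\ge u$ pointwise,
\[
\int_0^1 u\dd H\le\int_0^1\phi\dd H.
\]
Next we use convexity of $\phi$. By definition of a mean-preserving contraction, $F$ is a mean-preserving spread of $H$, so $F$ dominates $H$ in the convex order. Equivalently, $\int g\dd H\le \int g\dd F$ for every convex $g$, hence
\[
\int_0^1\phi\dd H\le\int_0^1\phi\dd F.
\]
If one prefers to verify this from the integral characterization $\int_0^x H\le\int_0^x F$ with equal means, the route is to write a convex $\phi$ on $[0,1]$ as an affine function plus a mixture of hinge functions $(x-a)^+$, using its right derivative as a Stieltjes measure. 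For each hinge the inequality is exactly the integrated-CDF condition, and the affine part integrates equally because the means coincide. Combining the two displays gives $\int u\dd H\le\int\phi\dd F$ for every feasible $H$.

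Second, for the candidate $G$, the support condition says $u=\phi$ $G$-a.s., so $\int u\dd G=\int\phi\dd G$. The third hypothesis then gives $\int\phi\dd G=\int\phi\dd F$. Hence $G$ attains the upper bound and is optimal.

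I expect the main obstacle to be technical rather than conceptual, namely integrability and boundary behavior. A convex $\phi$ on the closed interval $[0,1]$ may jump upward at the endpoints, and $u$ is only assumed to be some function. I would handle this as follows. Because $F$ is atomless and $\phi$ is finite and convex, $\phi$ is bounded above by $\max\{\phi(0),\phi(1)\}$ and is continuous on $(0,1)$. This makes $\int\phi\dd F$ finite and well defined. For $H$, any endpoint atoms are handled by the hinge representation extended to include the endpoint values. Finally, the integral $\int u\dd H$ is interpreted with values in $[-\infty,\infty)$; this is harmless, since the bound is an upper bound, and it is also what later allows the authors' $-\infty$-penalized $V_j$.
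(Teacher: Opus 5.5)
Your proof is correct, and it is exactly the weak-duality chain $\int u\dd H\le\int\phi\dd H\le\int\phi\dd F=\int\phi\dd G=\int u\dd G$ that Dworczak and Martini use to prove their Theorem~1; the paper gives no proof of its own and simply imports that result. Your boundary discussion can be shortened: since $F$ is atomless and is a mean-preserving spread of $H$, no $H\in\MPC(F)$ can place an atom at $0$ or $1$, so endpoint jumps of $\phi$ never affect either integral.
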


\begin{definition}[Definition 1 in \citet{dworczak2019simple}]
\label{def:regular}
    A function $u$ is regular if it is upper semi-continuous with at most finitely many one-sided jumps at some $y_1,\dots,y_k\in(0,1)$, and is Lipschitz continuous in each $(y_i,y_{i+1})$ with $y_0=0$ and $y_{k+1}=1$.
\end{definition}

\begin{lemma}[Theorem 2 in \citet{dworczak2019simple}]
\label{lem:dm-theorem-2}
    Suppose that $u$ satisfies the regularity condition in \Cref{def:regular}.
    There exists an optimal solution $G$, and for every optimal solution $G$, there exists a convex and continuous $\phi:[0,1]\to \mathbb{R}$ that satisfies all conditions in \Cref{lem:dm-theorem-1}.
\end{lemma}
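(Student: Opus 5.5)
Since this is Theorem~2 of \citet{dworczak2019simple}, we will follow their duality approach, adapted to the notation above. We treat the problem as an infinite-dimensional linear program in $G$ and pair it with a dual problem over convex functions. The proof then has three parts: existence of a primal optimum, existence of a dual optimum with no duality gap, and complementary slackness, which produces the three conditions of \Cref{lem:dm-theorem-1}.

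\emph{Step 1 (primal existence).} On $[0,1]$, the set $\MPC(F)$ is weak$^*$-compact. It is characterized by $\int_0^x G \le \int_0^x F$ for all $x$, with equality at $x=1$, and these inequalities survive weak limits because the integrated CDFs converge pointwise. Since $u$ is upper semicontinuous and bounded above, the Portmanteau theorem shows that $G\mapsto \int u\,\mathrm{d}G$ is weak$^*$ upper semicontinuous. A maximizer $G$ therefore exists.

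\emph{Step 2 (dual and no gap).} We consider the dual problem
\[
\min\Big\{\int_0^1 \phi \,\mathrm{d}F \;:\; \phi \text{ convex},\ \phi\ge u\Big\}.
\]
Weak duality is immediate: for any feasible $\phi$ and any $G\in\MPC(F)$,
\[
\int u\,\mathrm{d}G\le \int \phi\,\mathrm{d}G\le \int\phi\,\mathrm{d}F,
\]
where the second inequality holds because $G$ is a contraction of $F$. For strong duality and dual attainment, we first discretize $F$ (or the state grid) and invoke finite-dimensional LP duality; the finite problem is a concavification over a finite set of posterior means, so its dual optimum is a piecewise-linear convex majorant. We then pass to the limit.

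\emph{Step 3 (the main obstacle).} The key step is obtaining a uniformly Lipschitz, hence compact, family of dual solutions. This is where regularity is needed. Lipschitz continuity of $u$ between the finitely many jumps, together with the finiteness of those jumps, bounds the slopes of the minimal convex majorants in the interior. Near the endpoints $0$ and $1$ we would argue directly that we may take $\phi$ finite and continuous there. Given these slope bounds, Arzel\`a--Ascoli supplies a uniformly convergent subsequence whose limit $\phi$ is convex, continuous, satisfies $\phi\ge u$, and has value equal to the primal value. If we cannot control the slopes near the boundary, we would fall back on truncating the domain to $[\eta,1-\eta]$ and letting $\eta\to 0$.

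\emph{Step 4 (complementary slackness).} Once strong duality holds, equality must hold throughout the weak-duality chain. Equality in the first inequality gives $\int(\phi-u)\,\mathrm{d}G=0$. Because $\phi-u\ge 0$ and $\phi-u$ is lower semicontinuous, this forces $\supp(G)\subseteq\{u=\phi\}$. Equality in the second inequality gives $\int\phi\,\mathrm{d}G=\int\phi\,\mathrm{d}F$. These are exactly the conditions of \Cref{lem:dm-theorem-1}, and the argument applies to every optimal $G$, since the same $\phi$ works for all primal optima.
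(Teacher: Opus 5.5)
The paper gives no proof of this lemma; it quotes it from Dworczak and Martini. So the only question is whether your sketch stands on its own. Steps 1 and 4, and the weak-duality half of Step 2, are fine. Compactness of $\MPC(F)$ together with upper semicontinuity gives a maximizer. Once you have a convex continuous $\phi\ge u$ whose $\int\phi\,\mathrm{d}F$ equals the primal value, your complementary-slackness argument (via openness of $\{\phi-u>0\}$) yields all three conditions for every optimal $G$.

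All of the content of Theorem 2, however, is in dual attainment, and Step 3 only announces it, with a fallback. The mechanism you propose is also not the right one.
\begin{itemize}
\item Lipschitz continuity of $u$ between jumps does not bound the slopes of convex majorants near a jump. If $\phi$ touches $u$ at $x$ just left of an upward jump of size $J$ at $y$, and at $y$ itself, its slope there is of order $J/(y-x)$. That depends on where the touching points fall, which is determined by $F_m$, not by $u$.
\item The finite dual optima are non-unique and can be made arbitrarily large wherever $F_m$ has no mass.
\end{itemize}
What works in the interior is to select duals with a uniform sup-norm bound $B$; convexity alone then makes them $(2B/\eta)$-Lipschitz on $[\eta,1-\eta]$. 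You also still owe two things: convergence of the discretized primal values to the true value, and $\phi\ge u$ at points off the grids. At a jump point, the latter uses that the jump is one-sided.

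The endpoints are where regularity is genuinely needed, and your sketch says only that ``we would argue directly.'' The interior limit can jump upward at $0$ or $1$, so you must redefine $\phi(0)$ and $\phi(1)$ as one-sided limits. This keeps $\phi$ convex. It keeps $\phi\ge u$ only because $u$ has no jump at the endpoints. It leaves $\int\phi\,\mathrm{d}F$ unchanged because $F$ is atomless, and the weak-duality chain then forces $\int\phi\,\mathrm{d}G=\int\phi\,\mathrm{d}F$.

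Truncating to $[\eta,1-\eta]$ cannot substitute for this. Take $u=\mathbf{1}\{x=1\}$, which is upper semicontinuous with a jump only at the endpoint, and $F$ uniform. The value is $0$, since no positive mass of posteriors can sit exactly at $1$. Yet every continuous convex $\phi\ge u$ has $\int\phi\,\mathrm{d}F>0$, so no continuous price function exists.

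As written, the proposal is an outline with the decisive step missing. Either cite the result, as the authors do, or supply the uniform bounds, the value convergence, and the endpoint argument.
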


The main obstacle that prevents us from applying \Cref{lem:dm-theorem-1,lem:dm-theorem-2} directly is that the posterior payoff function $V_j(\mu)$ is \textit{not} regular in the sense of \Cref{def:regular}, since it is unbounded in the non-admissible support. To deal with this obstacle, we introduce the following regulated problem, where a finite punishment $M\in(0,\infty)$ replaces the infinite punishment on the non-admissible support. 

\begin{equation}
    \max_{G\in\MPC(F)} \int_{-1}^1 V_j'(\mu)\dd G(\mu) \tag{$\mathcal P_j'$}\label{eq:objective-j-B}
\end{equation}
where
\begin{equation*}
    V_j'(\mu):=
    \begin{cases}
        0 & \text{if } \mu\in[-1,c/p_j]~,\\
        p_j\cdot \mu-c & \text{if } \mu\in(c/p_j,1]\cap \mathcal{Q}_j~,\\
        -M & \text{if } \mu\in(c/p_j,1]\setminus \mathcal{Q}_j~.
    \end{cases}
\end{equation*}

When the punishment $M$ is sufficiently large, the optimal solution to Problem~\ref{eq:objective-j-B} converges to the optimal solution to Problem~\ref{eq:objective-ray-i}. Formally,

\begin{lemma}
\label{lem:equivsubproblem}
Suppose subproblem $j$ is feasible. When $M$ is sufficiently large, any optimal solution to Problem~\ref{eq:objective-ray-i} also solves Problem~\ref{eq:objective-j-B}.
\end{lemma}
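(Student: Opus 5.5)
The idea is to compare values through a penalty argument. The key observation is that the two problems agree on every feasible distribution. Fix $M$ and let $\mathcal{G}_j\subseteq\MPC(F)$ denote the set of distributions whose support lies in $[-1,c/p_j]\cup\mathcal{Q}_j$. This set is nonempty because subproblem $j$ is feasible. For every $G\in\mathcal{G}_j$ the objectives of Problem~\ref{eq:objective-ray-i} and Problem~\ref{eq:objective-j-B} coincide, since $V_j=V_j'$ on the admissible region. For every $G\notin\mathcal{G}_j$, Problem~\ref{eq:objective-ray-i} gives $-\infty$ (when the banned region has positive $G$-mass). Let $W^*$ be the optimal value of Problem~\ref{eq:objective-ray-i}; it is finite, bounded above by $\max_\mu|p_j\mu-c|\le 1+c$. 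Since $V_j'\ge V_j$ pointwise, the value of Problem~\ref{eq:objective-j-B} is at least $W^*$. It therefore suffices to show the following: for $M$ large, no $G$ that places positive mass on the banned region $B_j:=(c/p_j,1]\setminus\mathcal{Q}_j$ achieves more than $W^*$ in Problem~\ref{eq:objective-j-B}. Then any optimizer $G^*$ of Problem~\ref{eq:objective-ray-i} attains the maximum of Problem~\ref{eq:objective-j-B}.

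The main step is a \emph{repair} construction with a linear loss bound. Take $G\in\MPC(F)$ with mass $\eta:=G(B_j)>0$. I would build $\tilde G\in\mathcal{G}_j$ such that
\[
\int V_j\dd\tilde G\ \ge\ \int V_j'\dd G - \eta K - (\text{mass on }B_j)\cdot(-M)
\]
for some constant $K$ independent of $M$ and $G$. Concretely, I want $\int V_j'\dd G\le \int V_j\dd\tilde G+\eta(K-M)$. Once this holds, choosing $M>K$ makes $\int V_j'\dd G<W^*$, which proves the lemma.

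To construct $\tilde G$, I would use the structure of the admissible support. $\mathcal{Q}_j$ is a finite union of closed intervals whose endpoints are among the $\delta_i$, and the set $[-1,c/p_j]$ is always admissible. For each posterior mean $\mu\in B_j$, I would split it by a mean-preserving spread into two points of $A_j:=[-1,c/p_j]\cup\mathcal{Q}_j$ that bracket it: a left point $a(\mu)\le\mu$, for example $-1$ or the nearest admissible point, and a right point $b(\mu)\ge\mu$ in $A_j$. Splitting posteriors keeps the distribution in $\MPC(F)$, because spreading a posterior corresponds to refining the signal. The difficulty is that a right point above $\mu$ in $A_j$ may not exist, namely when $B_j$ contains a neighborhood of $1$. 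In that case I would instead merge the mass on $B_j$ with some mass of $G$ lying in $A_j$. Alternatively, I could use the feasibility of subproblem $j$: take a feasible $G^0\in\mathcal{G}_j$ and form the mixture $\tilde G=(1-\eta')G|_{\text{kept}}+\dots$. The simplest version of this uses the convexity of $\MPC(F)$ together with a reallocation: replace $G$ by $(1-\lambda)G+\lambda G^0$ and then spread only the troublesome atoms. I expect this case, guaranteeing an admissible target when the banned region is at the top or is otherwise unbracketable, to be the main obstacle. My first attempt would be to pool the banned mass with a positive fraction of the mass of $[-1,c/p_j]$ that lies below it. I would argue that this mass is available in proportion to $\eta$, as long as the pooled mean lands in an admissible point, and if that fails I would fall back on mixing with $G^0$.

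Finally, I would bound the payoff change. Every point of $\tilde G$ that differs from $G$ arises from mass $O(\eta)$, possibly scaled by a constant that depends only on the fixed geometry of $A_j$ (the distances between the $\delta_i$) and on $F$. Because $V_j$ and $V_j'$ are bounded by $1+c$ on $A_j$, the payoff change outside $B_j$ is at most $K\eta$. On $B_j$ we remove $-M\eta$. Together these give the displayed inequality, and taking $M>K$ completes the argument.
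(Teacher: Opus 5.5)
\paragraph{There is a genuine gap: the repair step is the whole lemma, and the mechanism you propose for it does not work.}

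Your reduction is sound. Since $V_j'\ge V_j$, with equality on $A_j$, it suffices to show that for large $M$ no $G\in\MPC(F)$ charging $B_j$ beats $W^*$ in $(\mathcal{P}_j')$. But the inequality you then aim for, $\int_{A_j}V_j\,dG\le W^*+K\,G(B_j)$ for all $G$ with $K$ independent of $G$, is exactly the lemma restated. It asserts that the support constraint $G(B_j)=0$ has a finite shadow price. That constraint admits no strictly feasible point, so such a price is not automatic, and your repair does not supply it.

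The one concrete mechanism you give rests on a false claim. You propose splitting each banned posterior into two admissible points by a mean-preserving spread, but $\MPC(F)$ is not closed under mean-preserving spreads (take $G=F$). A posterior can only be split into beliefs supported on the states that generated it. So a banned posterior generated by a narrow pool of states inside $B_j$, or by a fully revealed state there, cannot be split at all. Merging (garbling) is the only operation on $G$ that is always available. For merging, you give no argument that suitable admissible mass exists in proportion to $\eta$ with a constant independent of $G$. Mixing with $G^0$ does not help by itself: it leaves $(1-\lambda)\eta$ of banned mass, which you again propose to ``spread''.

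\paragraph{This already bites in the paper's own example.} Take the prior $F^m$ (mass $\tfrac12$ spread uniformly near each of $a$ and $b$, with $(a,b)$ a support gap) and the subproblem targeting $\tilde n$. Let $\varepsilon_m<(b-a)/2$ and call posteriors with mean $\le a$ low and those with mean $\ge b$ high. Sending lower-cluster states to high posteriors, or upper-cluster states to low ones, violates one of the two aggregate mean constraints. Since each cluster's mean is exactly its endpoint, the unique feasible distribution puts mass $\tfrac12$ on $a$ and $\tfrac12$ on $b$. So there is no interior reservoir, and every atom of $G^0$ sits at a gap endpoint. Merging a banned posterior with atoms at $a$ or $b$ never yields an admissible mean.

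Now let $G$ reveal a short interval of states just above $a$ and pool the rest of each cluster. The only garbling repair merges the banned mass with the entire lower pool, which is $\Theta(1)$ mass rather than $O(\eta)$. The desired inequality still holds here, but only because $V_j$ is affine on the proposal region, so merging banned mass into proposal posteriors never lowers the value. Your accounting (bounded values times $O(\eta)$ moved mass) never uses this.

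\paragraph{How the paper avoids a uniform-in-$G$ repair.}
\begin{enumerate}
\item It takes an optimizer $G'$ of $(\mathcal{P}_j')$, which exists because $V_j'$ is regular.
\item Using the Dworczak--Martini characterization, it argues that if $G'$ charges $B_j$, then $G'$ must be one of finitely many two-threshold structures pinned down by Bayes plausibility, independently of $M$.
\item The value of each such structure is affine in $M$ with negative slope, so it is negative for large $M$. This contradicts the nonnegative optimal value.
\item Hence $G'$ is feasible for $(\mathcal{P}_j)$, and $G^*$ does at least as well in $(\mathcal{P}_j')$.
\end{enumerate}
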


\begin{proof}
Let $G^*$ denote an optimal solution to Problem~\ref{eq:objective-ray-i} when subproblem $j$ is feasible. Let $G'$ denote an optimal solution to Problem~\ref{eq:objective-j-B}. Note that $G'$ always exists for any $M$ because of \Cref{lem:dm-theorem-2}.

We first show that, for any $M$, the optimal value of Problem~\ref{eq:objective-j-B} is weakly positive.
To see this, since $V_j'=V_j$ on $[-1,c/p_j]\cup \mathcal{Q}_j$ and the same $G^*$ is also feasible for Problem~\ref{eq:objective-j-B}, it holds that
\[
\int_{-1}^1 V_j'(\mu) \dd G'(\mu)\ge\int_{-1}^1 V_j'(\mu) \dd G(\mu)
=
\int_{-1}^1 V_j(\mu) \dd G(\mu)
\ge 0~.
\]

Then, we show that when $M$ is sufficiently large, any optimal solution $G'$ to Problem~\ref{eq:objective-j-B} must satisfy $\supp(G')\subseteq \mathcal{Q}_j\cup[-1,c/p_j]$.
To prove this, write
\begin{equation*}
    \mathcal{Q}_j\cap[c/p_j,1]
    =
    [l_1,r_1]\cup\cdots\cup[l_g,r_g]~,
\end{equation*}
where $c/p_j\le l_1<r_1<\cdots<l_g<r_g\le 1$.
Suppose instead that $\supp(G')\nsubseteq \mathcal{Q}_j\cup[-1,c/p_j]$. By \Cref{lem:dm-theorem-1,lem:dm-theorem-2}, there must then exist some gap index $y\in[g-1]$ and two thresholds $r_y< \tilde{\theta}_1<\tilde{\theta}_2<l_{y+1}$ such that $G'$ is induced by a two-threshold test: mass from $[-1,\tilde{\theta}_1]$ is assigned to $r_y$, mass from $[\tilde{\theta}_1,\tilde{\theta}_2]$ is assigned to some point in $(r_y,l_{y+1})$, and mass from $[\tilde{\theta}_2,1]$ is assigned to $l_{y+1}$. 
The objective under $G'$ is
\begin{align*}
    F(\tilde{\theta}_1)V_j'(r_y)
    +\bigl(1-F(\tilde{\theta}_2)\bigr)V_j'(l_{y+1})+\bigl(F(\tilde{\theta}_2)-F(\tilde{\theta}_1)\bigr)(-M)~.
\end{align*}
For any fixed gap $(r_y,l_{y+1})$, Bayesian plausibility determines at most one such two-threshold test. Hence, whenever such a test exists, we can choose some finite constant $M_y$ large enough that its objective value is strictly negative (hence suboptimal to Problem~\ref{eq:objective-j-B}). Since there are only finitely many gaps, taking $M>\max_y M_y$ over all feasible gaps implies that no optimal solution to Problem~\ref{eq:objective-j-B} can place mass outside the admissible support, i.e., on $(c/p_j,1]\setminus \mathcal{Q}_j$.

Finally, we show that $G^*$ must also be optimal for Problem~\ref{eq:objective-j-B}. Let $G'$ be any optimal solution to Problem~\ref{eq:objective-j-B}. 
Since both $G^*$ and $G'$ are feasible for the original subproblem and $G^*$ is optimal to Problem~\ref{eq:objective-ray-i}, we have
\[
\int_{-1}^1 V_j(\mu)\dd G^*(\mu)
\ge
\int_{-1}^1 V_j(\mu)\dd G'(\mu)~.
\]
Using again that $V_j'=V_j$ on $[-1,c/p_j]\cup \mathcal{Q}_j$, it follows that
\[
\int_{-1}^1 V_j'(\mu)\dd G^*(\mu)
\ge
\int_{-1}^1 V_j'(\mu)\dd G'(\mu)~.
\]
Therefore, $G^*$ is also an optimal solution to Problem~\ref{eq:objective-j-B}.
This proves the lemma.
\end{proof}

\medskip
\noindent \underline{\emph{Step 1.3: Solve each feasible subproblem.}}

Because of \Cref{lem:equivsubproblem}, we can switch gears to Problem~\ref{eq:objective-j-B} with sufficiently large $M$. Since the function $V_j'(\mu)$ satisfies the regularity conditions, \Cref{lem:dm-theorem-2} implies the existence of an optimal solution $G^*_j$ for the subproblem $j$, which is fully characterized by \Cref{lem:dm-theorem-1}. In particular, there exists a convex function $\phi_j:[-1,1]\to\mathbb{R}$ such that:
\begin{enumerate}
    \item $\phi_j(\mu) \ge V_j'(\mu)$ for all $\mu \in [-1,1]$;
    \item $\supp(G^*_j) \subseteq \{\mu\in[-1,1] \mid \phi_j(\mu) = V_j'(\mu)\}$;
    \item $\int_{-1}^1 \phi_j(x) \dd  G^*_j(x) = \int_{-1}^1 \phi_j(x) \dd  F(x)$.
\end{enumerate}

From exhausting all possible forms of $\phi_j$, we can infer that $G^*_j$ can be implemented by a bi-pooled threshold test $t^*_j=(\signalSpace^*_j,\pi^*_j)$.
Specifically, the test is parameterized by a threshold $\tilde\theta_j\in[-1,1]$ and uses at most three signals $\signalSpace^*_j=\{s^-,s^+_1,s^+_2\}$. 
There exists a monotone partition of $[\tilde\theta_j,1]$ into three disjoint sets $\Theta_1$, $\Theta_2$ and $\Theta_3$ such that
\begin{align*}
    &\pi^*_j(s^-\mid \theta)=1~,\quad\mathrm{for\ all\ } \theta\in[-1,\tilde\theta_j]~,\\
    &\pi^*_j(s^+_1\mid \theta)=1~,\quad\mathrm{for\ all\ } \theta\in\Theta_1\cup\Theta_3~,\\
    &\pi^*_j(s^+_2\mid \theta)=1~,\quad\mathrm{for\ all\ } \theta\in\Theta_2~.
\end{align*}
By the DM theorem, the convex function $\phi_j$ must be piecewise linear and can have at most one kink. The kink corresponds to the threshold $\tilde{\theta}_j$. The tangency, or overlap, between the second linear segment of $\phi_j$ and $V_j'$ then determines whether we need one or two signals in the region above the threshold.

\medskip
\noindent
\underline{\textbf{Step 2: Consolidation.}}

The \emph{consolidation} step collects the optimal solutions to all subproblems and selects the one most preferred by the principal.

\medskip
\noindent \underline{\emph{Step 2.1: There exists a feasible subproblem in any instance.}}

Let $\mu_0:=\E[\theta]$ denote the prior mean. If $\mu_0\le c$, then every subproblem admits a feasible solution, since pooling all states into a single non-proposal signal is feasible and yields the principal a payoff of zero. If $\mu_0\in(c,\max_{j\in[q]} c/p_j]$, then any subproblem $j$ satisfying $c/p_j\ge \mu_0$ admits a feasible solution for the same reason. Finally, if $\mu_0\in(\max_{j\in[q]} c/p_j,1]$, then pooling all states into a single proposal signal is feasible and yields the principal a strictly positive payoff for some subproblem. In particular, there exists some $j\in[q]$ such that $\mu_0\in \mathcal{Q}_j$.

\medskip
\noindent\underline{\emph{Step 2.2: Select the best among these feasible subproblems.}}

We have shown that every feasible subproblem $j$ admits an optimal solution $G_j^*$. By Step~1, for sufficiently large $M$, $G_j^*$ is also optimal for Problem~\ref{eq:objective-j-B}. Moreover, \Cref{lem:dm-theorem-1} fully characterizes $G_j^*$, and in particular implies that $G_j^*$ is induced by a bi-pooled threshold test. Hence, we end up with at least one candidate solution from the subproblems. We may therefore select, among all these candidate solutions, the one that maximizes the principal's payoff. This yields the optimal public test.

\subsection{Proof of \Cref{lem:private-binary-test}}
\label{apx:private-binary-test}

Fix any test $t$. By \Cref{prop:private-babbling}, for any outcome $\outcome\in\Omega_{\mathsf{pri}}(t)$ implementable under private messaging, there exists a babbling equilibrium $\rho=(\gamma,\sigma,\boldsymbol{\tau},\boldsymbol{\psi})$ that implements $\outcome$. In this equilibrium, after any signal $s\in \signalSpace_\rho^+$ and conditional on proposal, the principal always sends the same message profile $\boldsymbol{m}^*\in\messageSpace^n$. Moreover, by \Cref{lem:sameproposeprob}, all signals from $\signalSpace_\rho^+$ induce the same launch probability, denoted by $p^*>0$.

\medskip
\noindent
\underline{\emph{Binary Test Construction.}}

We now construct a binary test $t'=(\signalSpace',\pi')$ and an equilibrium $\rho'=(\gamma',\sigma',\boldsymbol{\tau}',\boldsymbol{\psi}')$
that implements the same outcome $\outcome$. Let $\signalSpace':=\{s^-,s^+\}$ and define
\begin{equation*}
    \pi'(s^+\mid \theta):=\sum_{s\in\signalSpace}\pi(s\mid \theta)\gamma(s)
    \quad\text{and}\quad
    \pi'(s^-\mid \theta):=1-\pi'(s^+\mid \theta)~,
    \quad \forall \theta\in\Theta~.
\end{equation*}
Thus, under $t'$, signal $s^+$ is realized exactly with the probability that the principal proposes under the original equilibrium.
Given $t'$, define the proposal strategy by
\begin{equation*}
    \gamma'(s^+):=1
    \quad\text{and}\quad
    \gamma'(s^-):=0~.
\end{equation*}
Conditional on proposal, let the principal always send the same message profile $\boldsymbol{m}^*$:
\begin{equation*}
    \sigma'(\boldsymbol{m}^*\mid s^+):=1~.
\end{equation*}
Finally, we let $\boldsymbol{\tau}' = \boldsymbol{\tau}$ and $\boldsymbol{\psi}'=\boldsymbol{\psi}$. 

\medskip
\noindent
\underline{\emph{Equilibrium and Outcome Implementation Verification.}}

We next show that $\rho'$ is an equilibrium under $t'$. First, we consider the voting stage. Whenever the principal proposes, the agents receive the same message profile $\boldsymbol{m}^*$. By construction of $\pi'$, for every state $\theta$, $\pi'(s^+\mid \theta)$ is exactly the probability that the project is proposed under the original equilibrium. Hence, the posterior belief induced by a proposal under $t'$ coincides with the posterior belief induced by a proposal under the original equilibrium. Therefore, the agents' beliefs $\boldsymbol{\psi}'$ remain consistent and their strategies $\boldsymbol{\tau}'$ remain sequentially rational.

Second, we consider the communication stage. Given that the agents essentially ignore the principal's message in a babbling equilibrium, $\sigma'$ remains optimal. 

Third, we consider the proposal stage. If the principal observes $s^+$ under $t'$, then proposing and sending $\boldsymbol{m}^*$ yields launch probability $p^*$ (since the voting behaviors of the agents remain unchanged). Let $\mu^+$ denote her posterior belief after observing $s^+$. By Bayes' rule, $\mu^+$ is a convex combination of the posterior beliefs induced by those original signals $s$ with $\gamma(s)>0$.
Hence, $\E_{\theta\sim \mu^+}[u(\theta)]\cdot p^*-c$ is also a convex combination of the $\E[u(\theta)\mid s]\cdot p^*-c$ over signals $s$ with $\gamma(s)>0$. By sequential rationality of $\gamma$ in the original equilibrium, each of these quantities is weakly nonnegative. Therefore,
\[
\E_{\theta\sim \mu^+}[u(\theta)]\cdot p^*-c\ge 0~,
\]
and thus proposing is optimal after $s^+$.

Similarly, if $s^-$ is realized with positive probability, let $\mu^-$ denote the posterior belief after observing $s^-$. Then $\mu^-$ is a convex combination of the posterior beliefs induced by those original signals $s$ with $\gamma(s)<1$.
If the principal deviates and proposes after observing $s^-$, then, since the equilibrium is babbling, sending $\boldsymbol{m}^*$ again induces launch probability $p^*$. Hence her deviation payoff is $\E_{\theta\sim \mu^-}[u(\theta)]\cdot p^*-c$, which is a convex combination of the quantities $\E[u(\theta)\mid s]\cdot p^*-c$ over signals $s$ with $\gamma(s)<1$. By sequential rationality of $\gamma$ in the original equilibrium, each of these quantities is weakly nonpositive. Therefore,
\[
\E_{\theta\sim \mu^-}[u(\theta)]\cdot p^*-c\le 0~,
\]
and thus forgoing is optimal after $s^-$. 

Finally, we verify that $\rho'$ implements the same outcome $\outcome$. For every state $\theta$, the proposal probability $\pi'(s^+\mid\theta)$ is exactly the proposal probability under the original equilibrium. The argument thus follows.

\subsection{Proof of \Cref{prop:privatelinear}}
\label{apx:privatelinear}

By \Cref{lem:private-binary-test}, under private messaging, it is without loss to restrict attention to binary tests with the associated truthful equilibrium. Thus, it suffices to show that any binary test can be replaced by a threshold test that weakly improves the principal's payoff.

Consider any binary test $t=(\signalSpace,\pi)$ with $\signalSpace=\{s^-,s^+\}$, together with a truthful equilibrium in which the principal forgoes after $s^-$ and proposes after $s^+$. Let
\[
\mu^+:=\E[\theta\mid s^+]
\quad\text{and}\quad
\mu^-:=\E[\theta\mid s^-]
\]
denote the posterior mean values induced by the two signals. Let $p(\mu^+)$ denote the launch probability conditional on proposal after $s^+$; under the linear environment, this probability depends only on $\mu^+$. The principal's optimal proposal decision implies
\[
\mu^+\cdot p(\mu^+)-c\ge 0
\quad\text{and}\quad
\mu^-\cdot p(\mu^+)-c\le 0~.
\]
Hence $\mu^+ \ge \mu^-$. If $\mu^+=\mu^-$, then the binary test is uninformative, and the conclusion is immediate. We therefore focus on the case where $\mu^+>\mu^-$. By Bayesian plausibility, we have $\mu^-<\E[\theta]<\mu^+$. 

Next, we construct an alternative threshold test $t'$ that preserves the posterior mean induced by $s^+$ and weakly raises its ex ante probability. The same construction appears as Lemma~2 in \citet{li2024information}. Define 
\[
\phi(r):=\E[\theta\mid \theta\ge r]
=
\frac{\int_r^1 \theta \dd F(\theta)}{1-F(r)}~,
\quad \forall r\in[-1,1)~.
\]
Because $F$ is atomless, $\phi$ is continuous and weakly increasing. Moreover,
\[
\phi(-1)=\E[\theta]<\mu^+
\quad\text{and}\quad
\lim\limits_{r\rightarrow 1}\phi(r)=1~.
\]
Hence, there exists $r^*\in[-1, 1]$ such that $\phi(r^*)=\mu^+$. Define a threshold test $t'=(\signalSpace,\pi')$ with the mapping
\[
\pi'(s^+\mid \theta):=\ind{\theta\ge r^*}
\quad \mathrm{and} \quad
\pi'(s^-\mid \theta):=1-\pi'(s^+\mid \theta)~,\quad \forall \theta\in[-1,1]~.
\]
This threshold test preserves the posterior mean induced by $s^+$, as $\E_{t'}[\theta\mid s^+] = \mu^+$. Also, it weakly increases the ex ante probability of $s^+$. By the construction of a threshold test, any non-threshold binary test whose ex ante probability of $s^+$ is $\Pr(s^+\mid t')$ must have $s^+$-induced posterior mean strictly lower than $\mu^+$, and therefore, there cannot be any non-threshold binary test with $s^+$-induced posterior mean being $\mu^+$ that is associated with an ex ante probability of $s^+$ being weakly greater than $\Pr(s^+\mid t')$. In other words, 
we have $\Pr(s^+\mid t')\ge \Pr(s^+\mid t)$.

Such a construction will weakly increase the principal's payoff. Specifically, when we restrict attention to binary tests in a linear environment, the principal's payoff can be pinned down by the posterior mean $\mu^+$ induced by $s^+$, together with the ex ante probability of $s^+$. Fixing $\mu^+$ while increasing the ex ante probability of $s^+$ weakly raises the principal's payoff.

\subsection{Proof of \Cref{thm:strict}}
\label{apx:strict}

We prove in two steps, as sketched in \Cref{sec:thm:strict:proofsketch}. The two lemmas below correspond to these two steps, respectively. Without loss of generality, let $-1=\delta_0<\delta_1<\delta_2<\dots<\delta_n<\delta_{n+1}=1$.

\begin{lemma}
\label{lem:exist-consensus-gap}
There exists a support gap if and only if there exist two conflicting allies.
\end{lemma}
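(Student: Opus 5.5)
The plan is to argue directly from the piecewise-constant structure of the approval count. For $\mu$ not equal to any threshold, let $n(\mu):=\sum_{l\in[n]}\ind{\mu\in\mathcal{A}_l}$. This function is constant on every open interval containing no $\delta_l$. As $\mu$ increases through a threshold $\delta_l$, the count changes by exactly one unit (by genericity, only one agent has threshold $\delta_l$):
\begin{itemize}
\item it drops by one if $\alpha_l=-1$, since agent $l$ accepts on $[-1,\delta_l]$;
\item it rises by one if $\alpha_l=1$, since agent $l$ accepts on $[\delta_l,1]$.
\end{itemize}
At $\mu=\delta_l$ itself, $N(\delta_l)$ contains both the left-limit and the right-limit values of $n$, and $\max N(\delta_l)$ is the larger of the two.

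\textbf{Support gap implies conflicting allies.} Let $(a,b)\subseteq(0,1)$ be a support gap with level $\tilde n$. First I show that $a$ must be a threshold. If $a$ were not a threshold, then $n$ would be constant on a neighborhood of $a$, so $N(a)=\{n(a)\}=\{\tilde n\}$ would coincide with $\max N(x)$ for $x$ slightly above $a$. That contradicts condition (ii) of \Cref{def:supportgap}. So $a=\delta_i$ for some $i$.

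Moreover, since $\tilde n\in N(a)$ while $\max N(x)<\tilde n$ just to the right of $a$, the count must drop when crossing $a$. Hence $\alpha_i=-1$. The symmetric argument at $b$ shows that $b=\delta_j$, and that the count rises when crossing $b$, so $\alpha_j=1$. Since $0<a<b<1$, we obtain $0<\delta_i<\delta_j<1$, so $i,j$ are conflicting allies in the sense of \Cref{def:conflicting-allies}.

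\textbf{Conflicting allies imply a support gap.} Among all pairs of conflicting allies, I would select a pair $(i,j)$ that minimizes $\delta_j-\delta_i$. Such a pair exists because there are finitely many agents.

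I claim no threshold $\delta_l$ lies in $(\delta_i,\delta_j)$. Suppose some $\delta_l$ did.
\begin{itemize}
\item If $\alpha_l=1$, then $(i,l)$ is a conflicting-ally pair, and its gap is strictly smaller than $\delta_j-\delta_i$.
\item If $\alpha_l=-1$, then $\delta_l>\delta_i>0$, so $(l,j)$ is a conflicting-ally pair, again with a strictly smaller gap.
\end{itemize}
Either case contradicts minimality. Hence $n\equiv c_0$ on $(\delta_i,\delta_j)$ for some constant $c_0$.

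Now compute $N$ at the endpoints. At $\mu=\delta_i$, agent $i$ (with $\alpha_i=-1$) still accepts, so $c_0+1\in N(\delta_i)$. At $\mu=\delta_j$, agent $j$ (with $\alpha_j=1$) already accepts, so $c_0+1\in N(\delta_j)$. For every $x\in(\delta_i,\delta_j)$ we have $\max N(x)=c_0<c_0+1$. Setting $\tilde n:=c_0+1\le n$, the interval $(\delta_i,\delta_j)\subseteq(0,1)$ is therefore a support gap.

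The main care point is the boundary bookkeeping in the first direction, namely ruling out that $a$ or $b$ is a non-threshold point or a threshold of the wrong orientation. This is handled by the one-unit jump structure together with the genericity assumption that all $\delta_l$ are distinct.
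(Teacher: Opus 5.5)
Your proof is correct and is essentially the paper's argument. The first direction uses the same threshold-and-orientation bookkeeping, and choosing a conflicting pair with minimal $\delta_j-\delta_i$ is just another way of extracting the paper's adjacent pair of thresholds with signs $(-1,+1)$ inside $[\delta_i,\delta_j]$.

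One soft spot, which the paper's proof shares, is the step ``$0<a$''. The condition $(a,b)\subseteq(0,1)$ only gives $a\ge 0$, and the knife-edge case $a=\delta_i=0$ actually breaks the literal statement: with $n=2$, $v_1(\theta)=-\theta$ and $v_2(\theta)=\theta-1/2$, the interval $(0,1/2)$ is a support gap but there are no conflicting allies. So the support-gap definition should implicitly require $a>0$.
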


\begin{proof}
We prove the two directions in turn.

\medskip
\noindent
\underline{\emph{Proof of Sufficiency.}}

Suppose $(a,b)$ is a support gap. By its definition, there exists $\tilde{n}\le n$ such that
\[
\tilde{n}\in N(a)~, \quad \tilde{n}\in N(b)~, \quad \text{and}\quad \max N(x)<\tilde{n}~, \quad \forall x\in(a,b)~.
\]
We claim that both $a$ and $b$ must be acceptance thresholds of some agents. If $a\neq \delta_i$ for all $i\in[n]$, then $N(\mu)$ is singleton and locally constant around $a$, which contradicts $\max N(x)<\tilde{n}$ for all $x>a$ sufficiently close to $a$. Hence $a=\delta_i$ for some unique agent $i$. By the same argument, $b=\delta_j$ for some unique agent $j$.

Next, we show $\alpha_i=-1$ and $\alpha_j=1$. Since $\tilde{n}\in N(a)$ but $\max N(x)<\tilde{n}$ for all $x\in(a,b)$, the number of supporters must drop to the right of $a$. This is possible only if agent $i$ rejects posterior means slightly above $a$, that is, only if $\alpha_i=-1$. Similarly, $\alpha_j=1$. Thus, we have
\[
0<\delta_i=a<b=\delta_j<1~.
\]
Therefore, agents $i$ and $j$ are the principal's conflicting allies.

\medskip
\noindent
\underline{\emph{Proof of Necessity.}}

Suppose there exist two conflicting allies $i\neq j$ with
\[
\alpha_i=-1~,\quad \alpha_j=1~\quad \text{and}\quad 0<\delta_i<\delta_j<1~.
\]
Consider all acceptance thresholds in the interval $[\delta_i,\delta_j]$, ordered increasingly. Since the corresponding sequence of signs starts with $-1$ and ends with $1$, there must exist two adjacent agents $d$ and $d+1$ with $\alpha_d=-1$
and $\alpha_{d+1}=1$.
Let
\[
a:=\delta_d
\quad\text{and}\quad
b:=\delta_{d+1}~.
\]
$N(x)$ is a singleton and constant for all $x\in(a,b)$. Let $N(x)=\{\bar n\}$. At $a$, agent $d$ is indifferent, so he may either accept or reject; hence $\bar n+1\in N(a)$.
In the same manner, $\bar n+1\in N(b)$.
Besides, we have $\max N(x)=\bar n<\bar n+1$ for any $x\in(a,b)$.
Thus, $(a,b)$ forms a support gap.
\end{proof}

The following lemma establishes the equivalence between the existence of a support gap and the strict dominance of public messaging.

\begin{lemma}%[Equivalence between support gap and the Strict Superiority of Public Messaging]
\label{lem:consensus-gap-strict}
There exists a prior $F$ and a cost $c>0$ such that public messaging strictly outperforms private messaging if and only if there exists a support gap.
\end{lemma}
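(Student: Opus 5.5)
We prove the two directions separately, comparing the optimal values of ($\mathcal P_\mathsf{pub}$) and ($\mathcal P_\mathsf{pri}$). By \Cref{cor:optimal_test_public_better} the public value is always weakly larger. So strict dominance means exactly that no threshold test attains the public optimum. Combining \Cref{prop:publiclinear} and \Cref{prop:privatelinear}, this happens if and only if every optimal public test is a bi-pooled threshold test whose two proposal signals $s_1^+,s_2^+$ induce distinct posterior means $\mu_1<\mu_2$, both above $c/p_j$.

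\emph{Only if.} Suppose public messaging is strictly better for some $(F,c)$. Take an optimal public test from the subproblem $j$ that attains the optimum, and let $n_j$ be its approval level. Both $\mu_1$ and $\mu_2$ lie in $\mathcal Q_j$, so $n_j\in N(\mu_1)\cap N(\mu_2)$.

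First we show that $\max N(x)<n_j$ for every $x\in(\mu_1,\mu_2)$. Suppose instead that some $x\in(\mu_1,\mu_2)$ has $\max N(x)\ge n_j$. We try merging $s_1^+$ and $s_2^+$ into one signal; its posterior mean $\bar\mu$ is a convex combination of $\mu_1,\mu_2$ and exceeds $c/p_j$. There are two cases:
\begin{itemize}
\item If $n_j\in N(\bar\mu)$, the merged test is a credible binary test with the same launch probability $p_j$. The objective $p_j\mu-c$ is linear in $\mu$, so the payoff is unchanged.
\item If $\max N(\bar\mu)>n_j$, the payoff strictly increases after we re-optimize.
\end{itemize}
Either way the public optimum is attained by a binary proposal test. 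Applying the threshold improvement from \Cref{prop:privatelinear} then shows private messaging does at least as well, a contradiction.

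There is a gap here: the hypothesis only gives some point $x$ with high support, not necessarily $\bar\mu$ itself. To close it I would reshape the bi-pool rather than merge outright. Reassign mass between $s_1^+$ and $s_2^+$, continuously moving one posterior mean toward $x$ while shifting the other to keep the total mean fixed; linearity keeps the payoff constant along the way. Once one posterior hits a point with support at least $n_j$ where merging is admissible, we are done.

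Then we shrink to a support gap. Take $a=\sup\{x\le\mu_2:\ n_j\in N(x),\ x\le\text{any point of low support}\}$ and $b$ defined symmetrically. Since $N$ is piecewise constant with changes only at the thresholds $\delta_i$, this yields an interval $(a,b)\subseteq(0,1)$ with $n_j\in N(a)\cap N(b)$ and $\max N<n_j$ on $(a,b)$. This is a support gap; positivity follows from $\mu_1>c/p_j>0$.

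\emph{If.} Given a support gap $(a,b)$ with level $\tilde n$, fix a small $c$. Choose $F$ to approximate the two-point prior with masses $\lambda,1-\lambda$ at $a,b$, smoothed to be atomless (uniform on tiny intervals around $a$ and $b$), and pick $\lambda$ so that the prior mean lies in $(a,b)$.

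The public bi-pooled test that separates the two clusters earns approximately $p_{\tilde n}\bigl(\lambda a+(1-\lambda)b\bigr)-c$.

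Any binary test has a single posterior mean $\mu^+$ with launch probability $p(\mu^+)$. We compare it to the public value in three cases:
\begin{itemize}
\item If $\mu^+\in(a,b)$, then $p(\mu^+)\le p_{\tilde n-1}<p_{\tilde n}$, and the payoff is at most $p_{\tilde n-1}\cdot\E[\theta]$.
\item If $\mu^+$ is near $b$, the test must drop most of the mass near $a$, losing roughly $\lambda a\, p_{\tilde n}$.
\item If $\mu^+\le a$, the payoff is at most $p(a)\,a$; because $F$ has no mass below $a$, this is bounded by the value from the cluster at $a$.
\end{itemize}

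The main obstacle is that higher-support regions might exist outside $[a,b]$ with $p>p_{\tilde n}$, so we need a uniform strict inequality. To get it, I would take the mass concentrated enough that no posterior mean can reach such regions except by discarding clusters. Any feasible $\mu^+$ then lies in the convex hull $[a-\eta,b+\eta]$. Within that hull the regions $(a,b)$ and the small neighborhoods of $a,b$ exhaust the possibilities, and each case is strictly worse for small $\eta$ and $c$. This argument works for every fixed $\epsilon>0$, because $p_{\tilde n-1}<p_{\tilde n}$ strictly.
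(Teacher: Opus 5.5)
Your ``if'' direction is sound and follows the paper's bimodal-prior construction. Your case analysis over all binary tests uses the fact that every posterior mean lies in $[a-\eta,b+\eta]$, where the launch probability is at most $p_{\tilde n}$. That makes it more complete than the paper's argument, which compares only with the test pooling both clusters.

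\textbf{The gap is in the ``only if'' direction.} You set out to prove that $\max N(x)<n_j$ for \emph{every} $x\in(\mu_1,\mu_2)$. That claim is false in general, so no patch can establish it. Take the agents of the paper's numerical example ($v_1=\theta-0.6$, $v_2=-\theta+0.4$, $k=1$) with a prior concentrated near $0.2$ and $0.8$.
\begin{itemize}
\item Launch probability is maximal, equal to $p_1$ (one rational approval), on $[0,0.4]\cup[0.6,1]$. So no test earns more than $p_1\mathbb{E}[\theta]-c$.
\item Separating the two clusters attains this bound, and for small $c$ it strictly beats every binary test. The best binary test earns only about $0.45\,p_1$.
\item Yet every $x\in(0.2,0.4]$ has $\max N(x)=1=n_j$.
\end{itemize}

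Your reshaping repair fails for two concrete reasons. First, any redistribution of the same proposal states between $s_1^+$ and $s_2^+$ leaves the merged posterior mean at $\bar\mu$, so reshaping can never make merging ``admissible.'' Second, the payoff is not constant along the path once a moving posterior leaves $\mathcal Q_j$: its launch probability changes, and the equal-launch-probability requirement of \Cref{lem:sameproposeprob} is violated. The $\sup$-definition of $a$ in your shrinking step is also not well formed.

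\textbf{The fix.} It uses only what you already have, and it is what the paper's merging argument actually delivers. The paper's last sentence, declaring the interval between the two posterior means itself a support gap, makes the same overreach.
\begin{itemize}
\item Your two merging cases concern $\bar\mu$ alone. Together they show that $\max N(\bar\mu)\ge n_j$ contradicts strict dominance, so $\max N(\bar\mu)<n_j$.
\item Let $a:=\max\{x\in[\mu_1,\bar\mu]:\max N(x)\ge n_j\}$ and $b:=\min\{x\in[\bar\mu,\mu_2]:\max N(x)\ge n_j\}$.
\item These exist because $x\mapsto\max N(x)=\sum_i\mathbf{1}\{x\in\mathcal A_i\}$ is upper semicontinuous (each $\mathcal A_i$ is closed), and $\mu_1,\mu_2$ belong to the respective sets since $n_j\in N(\mu_1)\cap N(\mu_2)$.
\item Thresholds are distinct, so this count differs from its one-sided limits by at most one. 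It is at least $n_j$ at $a$ and at most $n_j-1$ just to its right, so $\max N(a)=n_j$; symmetrically $\max N(b)=n_j$. Hence $n_j\in N(a)\cap N(b)$.
\item By construction $\max N<n_j$ on $(a,b)$.
\item Finally, $(a,b)\subseteq(\mu_1,\mu_2)\subseteq(0,1)$ because $\mu_1>c/p_j>0$.
\end{itemize}
So $(a,b)$ is a support gap, and the ``every $x$'' claim should simply be dropped.
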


\begin{proof}
We prove the two directions separately.

\medskip
\noindent
\underline{\emph{Proof of Sufficiency.}}

Suppose there exists a support gap $(a,b)$. 
Note that, for the prior $F$ constructed below, there exists a threshold $C>0$ such that the same argument applies for every $c\in(0,C)$. 
We may therefore take $c$ to be arbitrarily small, and work directly with the limiting case $c\to 0$.

First, if we ignore the continuous assumption of the prior, we can directly construct such a binary distribution only supported at $\theta=a$ and $\theta=b$:
\begin{equation*}
    F^*(\theta):=
    \begin{cases}
        0\quad &\mathrm{if\ }\theta\in[-1,a)~,\\
        0.5\quad &\mathrm{if\ }\theta\in[a,b)~,\\
        1\quad &\mathrm{if\ }\theta\in[b,1]~.
    \end{cases}
\end{equation*}

Under private messaging, \Cref{prop:privatelinear} implies that an optimal test can be taken to be a threshold test.
Moreover, the cutoff $\tilde{\theta}$ must satisfy $\tilde{\theta}<a$; otherwise, the test would generate two proposal signals, contradicting the binary constraint.
Hence, the optimal private test pools the two masses into a single proposal posterior mean $(a+b)/2$ and yields the principal an expected payoff of
$$
U^*_{\mathsf{pri}}=\frac{a+b}{2}\cdot N\left(\frac{a+b}{2}\right)~.
$$

Under public messaging, by \Cref{prop:publiclinear}, there exists an optimal test that takes the form of a bi-pooled threshold test.
The optimal public test implements a bi-pooling design on $[0,1]$ (which coincides with full revelation under the binary prior) and yields the principal an expected payoff of
\begin{align*}
    U^*_{\mathsf{pub}}
    &=\frac{1}{2}\cdot a \cdot N(a)+\frac{1}{2}\cdot b \cdot N(b)\\
    &=\frac{a}{2}\cdot N(a)+\frac{b}{2}\cdot N(b)> \frac{a+b}{2}\cdot N\left(\frac{a+b}{2}\right)=U^*_{\mathsf{pri}}~,
\end{align*}
where the inequality holds since $(a,b)$ is a support gap.

Next, we impose the continuity requirement on the prior. 
We construct a sequence $\{F^m\}_{m\in\mathbb{Z}^+}$ of continuous, piecewise-linear CDFs on $[-1,1]$ that converges weakly to the binary distribution $F^*$. 
Intuitively, for sufficiently large $m$, each $F^m$ is bimodal, with almost all of its mass concentrated in small neighborhoods of $a$ and $b$, and with the two neighborhoods receiving approximately equal probability mass.

Specifically, for each $m\in\mathbb{Z}^+$, let $\varepsilon_m=1/m$ and for any state $\theta\in[-1,1]$
\begin{equation*}
F^m(\theta):=
\begin{cases}
0 & \mathrm{if\ }\theta\in[-1,\,a-\varepsilon_m)~,\\
\dfrac{\theta-(a-\varepsilon_m)}{4\varepsilon_m} 
& \mathrm{if\ }\theta\in[a-\varepsilon_m,\,a+\varepsilon_m)~,\\
1/2 & \mathrm{if\ }\theta\in[a+\varepsilon_m,\,b-\varepsilon_m)~,\\
1/2+\dfrac{\theta-(b-\varepsilon_m)}{4\varepsilon_m}
& \mathrm{if\ }\theta\in[b-\varepsilon_m,\,b+\varepsilon_m)~,\\
1 & \mathrm{if\ }\theta\in[b+\varepsilon_m,\,1]~.
\end{cases}
\end{equation*}
Obviously, by construction, each $F^m$ is continuous and piecewise linear. 
Let $\varphi$ be any bounded continuous function on $[-1,1]$. Then
\begin{align*}
\int_{-1}^1 \varphi(\theta)\dd  F^m(\theta)
&=\int_{a-\varepsilon_m}^{a+\varepsilon_m}\varphi(\theta)\dd F^m(\theta)
  +\int_{b-\varepsilon_m}^{b+\varepsilon_m}\varphi(\theta)\dd F^m(\theta)\\
&=\frac12\cdot\frac{1}{2\varepsilon_m}\int_{a-\varepsilon_m}^{a+\varepsilon_m}\varphi(\theta)\dd \theta
 +\frac12\cdot\frac{1}{2\varepsilon_m}\int_{b-\varepsilon_m}^{b+\varepsilon_m}\varphi(\theta)\dd \theta~.
\end{align*}
Since $\varphi$ is uniformly continuous on the compact set $[-1,1]$, it holds that
\[
\frac{1}{2\varepsilon_m}\int_{a-\varepsilon_m}^{a+\varepsilon_m}\varphi(\theta)\dd \theta\to \varphi(a)
\quad\mathrm{and}\quad
\frac{1}{2\varepsilon_m}\int_{b-\varepsilon_m}^{b+\varepsilon_m}\varphi(\theta)\dd \theta\to \varphi(b)~.
\]
Therefore,
\[
\int_{-1}^1 \varphi(\theta)\dd F^m(\theta)\to \frac12\,\varphi(a)+\frac12\,\varphi(b)
=\int_{-1}^1 \varphi(\theta)\dd F^*(\theta)~,
\]
which proves the weak convergence of $F^m\Rightarrow F^*$ as $m\to \infty$.

For sufficiently large $m$, under the prior $F^m$, any optimal private test pools all posterior mass in $[0,1]$ into a single proposal signal. Consequently, the induced proposal posterior converges to $(a+b)/2$, and the principal's payoff satisfies
\begin{equation*}
    \lim_{m\to\infty} U_{\mathsf{pri}}^m
    =
    \frac{a+b}{2}\cdot N\left(\frac{a+b}{2}\right)~.
\end{equation*}

By contrast, for sufficiently large $m$, an optimal public test adopts a bi-pooling structure over $[0,1]$. The corresponding payoff, therefore, satisfies
\begin{align*}
    \lim_{m\to\infty} U_{\mathsf{pub}}^m
    &=
    \frac{a}{2}\cdot N(a)+\frac{b}{2}\cdot N(b)>
    \frac{a+b}{2}\cdot N\left(\frac{a+b}{2}\right)
    =
    \lim_{m\to\infty} U_{\mathsf{pri}}^m~,
\end{align*}
where the strict inequality follows from the fact that $(a,b)$ is a support gap.

\medskip
\noindent
\underline{\emph{Proof of Necessity.}}

Suppose that, for some prior $F$ and cost $c>0$, the optimal public test strictly outperforms the optimal private test. 
By the proof of \Cref{prop:publiclinear}, any optimal public test must solve one of the subproblems in our decomposition. 
Moreover, \Cref{lem:dm-theorem-1,lem:dm-theorem-2} imply that the optimizer of that subproblem takes a bi-pooled threshold structure above some cutoff $\tau$: it generates two proposal signals $s_1^+$ and $s_2^+$ with posterior means denoted by $a$ and $b$ that satisfy $0<a<b<1$. 

Given such an optimum, any pooling of the proposal signals $s_1^+$ and $s_2^+$ cannot improve the principal's payoff. In particular, if these two signals are merged into a single proposal signal, the resulting posterior mean is the mixture $\bar{\mu}\in(a,b)$, while the ex ante proposal probability remains unchanged. 
Since the original bi-pooling scheme is optimal for the subproblem, and therefore also for the overall public problem, such a pooling deviation cannot yield a weak improvement.

It follows that the mixture posterior $\bar{\mu}$ must induce a strictly lower launch probability than the common launch probability induced by posterior means in neighborhoods of $a$ and $b$. Hence, $(a,b)$ constitutes a support gap.
\end{proof}

\subsection{Proof of \Cref{prop:strict-eps-to-zero}}
\label{apx:strict-eps-to-zero}

For any trembling-hand error $\epsilon>0$, there are at most $n+1$ distinct launch probabilities that can be induced in equilibrium. As $\epsilon\to 0$, these probabilities converge to the two extreme values, $0$ and $1$, because each agent's realized vote is then almost surely determined by his intended action. In this limit, the posterior means that induce a launch probability arbitrarily close to $1$ are summarized by the launch set $\Lambda$. 

The set $\Lambda$ is non-convex if and only if there exists a support gap $(a,b)\subseteq(0,1)$. 
Equivalently, there exist posterior means on the two sides of $(a,b)$ that each generate at least $k$ supporters and hence induce launch with probability one, whereas any posterior mean within $(a,b)$ induces launch with probability close to zero. 
In that case, using the same construction as in the proof of \Cref{thm:strict}, we can construct a bimodal prior that places most of its mass in small neighborhoods of $a$ and $b$. 
Under such a prior, the optimal public test strictly outperforms the optimal private test.

\end{document}